\documentclass[journal]{IEEEtran}
\usepackage{cite}
\usepackage{hyperref}
\usepackage{url}
\usepackage{lineno}
\usepackage{amsmath,amssymb,amsfonts,amsthm}
\usepackage{mathtools}
\usepackage{cases}
\usepackage{bm}
\usepackage{mathrsfs}
\usepackage{dsfont}
\usepackage{graphicx}
\usepackage[caption=false,font=footnotesize]{subfig}
\usepackage{float}
\usepackage{array}
\usepackage{multirow}
\usepackage{algorithm}
\usepackage{algorithmic}
\usepackage{xcolor}
\usepackage{textcomp}
\usepackage{setspace}
\usepackage{framed}
\usepackage{balance}

\allowdisplaybreaks

\newtheorem{remark}{Remark}
\newtheorem{theorem}{Theorem}

\newtheorem{lemma}{Lemma}

\newtheorem{corollary}{Corollary}

\makeatletter
\newcommand{\biggg}{\bBigg@{3}}
\newcommand{\Biggg}{\bBigg@{3.5}}
\makeatother
\makeatletter
\renewcommand{\maketag@@@}[1]{\hbox{\m@th\normalsize\normalfont#1}}
\makeatother
\def\BibTeX{{\rm B\kern-.05em{\sc i\kern-.025em b}\kern-.08em
T\kern-.1667em\lower.7ex\hbox{E}\kern-.125emX}}
\newcounter{problem}
\newcounter{save@equation}
\newcounter{save@problem}

\makeatletter
\newenvironment{problem}
{\setcounter{problem}{\value{save@problem}}%
\setcounter{save@equation}{\value{equation}}%
\let\c@equation\c@problem
\subequations
}
{\endsubequations
\setcounter{save@problem}{\value{equation}}%
\setcounter{equation}{\value{save@equation}}%
}

\usepackage{acro}

\DeclareAcronym{mimo}{
    short = MIMO,
    long = multiple-input and multiple-output,
}
\DeclareAcronym{jcc}{
      short = JCC,
      long  = joint communication and control,
}
\DeclareAcronym{bs}{
    short=BS,
    long=base station
}
\DeclareAcronym{cu}{
    short=CU,
    long=communication user
}
\DeclareAcronym{sinr}{
    short=SINR,
    long=signal-to-interference-plus-noise ratio
}
\DeclareAcronym{snr}{
    short=SNR,
    long=signal-to-noise ratio
}
\DeclareAcronym{lmmse}{
    short=LMMSE,
    long=linear minimum mean-square error
}
\DeclareAcronym{svd}{
    short=SVD,
    long=singular value decomposition
}
\DeclareAcronym{kf}{
    short=KF,
    long=Kalman filter
}
\DeclareAcronym{csi}{
    short=CSI,
    long=channel state information
}
\DeclareAcronym{dare}{
    short=DARE,
    long=discrete-time algebraic Riccati equation,
}
\DeclareAcronym{qos}{
    short=QoS,
    long=quality-of-service
}
\DeclareAcronym{lqg}{
    short=LQG,
    long=linear-quadratic-Gaussian
}
\DeclareAcronym{socp}{
    short=SOCP,
    long=second-order cone programming 
}
\DeclareAcronym{sca}{
    short=SCA,
    long=successive convex approximation 
}
\DeclareAcronym{cdf}{
    short=CDF,
    long=cumulative distribution function 
}
\DeclareAcronym{ci}{
    short=CI,
    long=confidence interval
}
\DeclareAcronym{zf}{
    short=ZF,
    long=zero-forcing
}
\DeclareAcronym{dofs}{
    short=DoFs,
    long=degrees of freedom
}
\DeclareAcronym{uav}{
    short=UAV,
    long=unmanned aerial vehicle 
}
\DeclareAcronym{iiot}{
    short=IIoT,
    long=industrial internet of things
}
\DeclareAcronym{isac}{
    short=ISAC,
    long=integrated sensing and communication
}
\DeclareAcronym{crb}{
    short=CRB,
    long=Cramér-Rao bound
}
\DeclareAcronym{wrt}{
    short=w.r.t.,
    long=with respect to
}

\begin{document}
\title{Joint Communication and Control Beamforming: A Closed-Loop Control Perspective}

\author{Hao Jiang, Chongjun Ouyang, Yuanwei Liu,~\IEEEmembership{Fellow,~IEEE}, Arumugam Nallanathan,~\IEEEmembership{Fellow,~IEEE}, 
\\ and Robert Schober,~\IEEEmembership{Fellow,~IEEE}
\thanks{H. Jiang, C. Ouyang, and A. Nallanathan are with the School of Electronic Engineering and Computer Science, Queen Mary University of London, London, E1 4NS, U.K. (email: \{hao.jiang, c.ouyang, a.nallanathan\}@qmul.ac.uk).}
\thanks{Y. Liu is with the Department of Electrical and Electronic Engineering, The University of Hong Kong, Hong Kong (email: yuanwei@hku.hk).}
\thanks{R. Schober is with the Institute for Digital Communications, Friedrich-Alexander-University Erlangen-Nürnberg (FAU), Germany (e-mail: robert.schober@fau.de).
}}
\maketitle

\begin{abstract}
A joint communication and control (JCC) framework is proposed, in which a base station (BS) simultaneously serves multiple communication users (CUs) and controls a physical plant in a closed-loop manner:
i) In the downlink, control inputs generated by the BS are transmitted and recovered at the plant, where the resulting wireless actuation distortion is incorporated into the plant-state evolution; 
ii) In the uplink, the current state of the plant is reported to the BS and tracked using a Kalman filter (KF) to support subsequent control-input generation.
To characterize the long-term control performance in the presence of inter-function interference, finite- and infinite-horizon linear quadratic Gaussian (LQG) costs are derived, which directly relate the beamforming design to the long-term evolution of the plant state.
Building on the above, JCC beamforming problems are formulated for both vector- and scalar-valued control inputs, in which the infinite-horizon LQG cost is minimized subject to per-user communication signal-to-interference-plus-noise ratio (SINR) requirements.
For the vector case, a second-order cone programming (SOCP)-based successive convex approximation method is developed to address the resulting nonconvex problem.
For the scalar case, a closed-form expression for the infinite-horizon LQG cost is derived, based on which the communication-control Pareto boundary is optimally characterized by an SOCP-based bisection method. 
The optimality of this method is guaranteed by proving the strict monotonicity of the scalar-valued control cost with respect to the control SINR.
Finally, our numerical results reveal: i) The derived control costs closely match those obtained via Monte Carlo simulations, while the KF method accurately tracks the ground-truth plant-state trajectory; and ii) the proposed methods consistently outperform the zero-forcing-based benchmark, indicating that balancing communication--control interference is beneficial, particularly when the available spatial degrees of freedom (DoFs) are limited.
\end{abstract}
\begin{IEEEkeywords}
Joint communication and control, Pareto boundary, beamforming design, communication-control trade-off.
\end{IEEEkeywords}

\section{Introduction}
Over the past decades, \ac{mimo} technology has played a pivotal role in the evolution of wireless communication systems by providing substantial multiplexing gains \cite{heath2026mimo}. 
In particular, deploying multiple antennas at the transceivers provides additional spatial \ac{dofs}, thereby enabling the simultaneous transmission of parallel data streams over the same time-frequency resources \cite{chen2021massive, emil2025enabling}.
In addition to enhancing the spectral efficiency, the spatial \ac{dofs} offered by \ac{mimo} can also facilitate the sharing of a common hardware infrastructure among multiple functionalities \cite{wen2025survey}. 
By exploiting these spatial \ac{dofs} to suppress both inter-user and inter-function interference, \ac{mimo} technology enables communication to be integrated with sensing, computing, and caching. 
Despite extensive research on multifunctional \ac{mimo} systems, the integration of communication and control has received comparatively limited attention.

Nevertheless, \ac{jcc} is becoming increasingly important for emerging cyber-physical applications, such as the \ac{iiot}, \ac{uav} control, and robotic networks \cite{liu2026b2xnetworks}. 
In these scenarios, wireless channels facilitate both information transmission and the delivery of control inputs that directly influence physical processes.
Specifically, the controller sends control inputs to the physical plant, which then reports its updated state to facilitate subsequent state estimation and the generation of new control inputs.
Under this closed-loop architecture, communication impairments can not only degrade the received-signal quality but also distort the applied control input, thereby perturbing the plant state and affecting subsequent control decisions. 
As demonstrated in \cite{zeng2019joint} and \cite{chang2022joint}, communication throughput and control stability are tightly coupled through the wireless channel, necessitating dedicated designs that jointly coordinate the two functionalities.

Despite its importance, \ac{jcc} fundamentally differs from other integrated-functionality paradigms, because control is a dynamic and temporally coupled process rather than an instantaneous functionality \cite{tatikonda2004control, leong2011optimal, sahai2006necessity}. 
Specifically, in \ac{isac} systems, sensing performance is typically evaluated using instantaneous or frame-level metrics, including sensing \ac{snr}, beampattern gain, and \ac{crb} \cite{liu2022survey_isac}. 
Similarly, integrated communication and computing or caching is typically characterized by metrics, such as computation latency, energy consumption, and cache-hit probability \cite{adam2023beyond}. 
These metrics also depend on short-term communication-resource allocation and generally do not capture the temporal evolution of a physical state over a closed-loop process.
By contrast, the control inputs in \ac{jcc} systems are generated by the controller and transmitted over wireless channels to actuate the plant repeatedly until the closed-loop system reaches its steady-state. 
Hence, any distortion of the transmitted control inputs can perturb the plant state, thereby affecting subsequent observations, state estimates, control actions, and ultimately the long-term control performance. 
Consequently, instantaneous metrics directly borrowed from other integrated-functionality systems are generally insufficient to fully characterize the performance of \ac{jcc} systems.

To bridge this gap, several studies have investigated the interplay between communication and control.
One line of research characterizes application-specific communication-control coupling and the resulting trade-offs.
In particular, the authors of \cite{zeng2019joint} derived the wireless-delay constraint for vehicular-platoon stability and optimized the control parameters to enhance system reliability.
Moreover, the authors of \cite{chang2022joint} investigated the relationship between the control-state error and beam-misalignment error, based on which an event-triggered control-update policy was developed.
Furthermore, the authors of \cite{gan2026modeling} characterized the trade-off between communication delay and steady-state control variance, and computed the corresponding performance regions and outage probabilities.
In parallel, another line of research incorporates control requirements into physical-layer transmission optimization.
More specifically, the authors of \cite{lyu2018dynamics} quantified the maximum tolerable packet-loss rate based on a predefined Lyapunov function, based on which a beamforming-assisted control-input scheduling scheme was devised.
Likewise, the authors of \cite{wang2024stability} considered receive-beamforming design subject to Lyapunov-based stability constraints, while the authors of \cite{ma2025robust} proposed a robust transmit-beamforming design subject to probabilistic successful-transmission constraints.
Moreover, the authors of \cite{leong2011optimal} studied the \ac{lqg} control process over a point-to-point fading channel, thereby establishing the fundamental closed-loop control framework for \ac{jcc}.
Lastly, from an information-theoretic perspective, the authors of \cite{sabag2023reducing} characterized the minimum conditional directed information required to achieve a prescribed \ac{lqg} cost.

However, most existing studies address only specific components of the closed-loop control process or model wireless transmission using high-level abstractions.
Specifically, commonly adopted reliability metrics and stability criteria, such as packet-loss probability, successful-transmission probability, and control stability, do not directly reveal how wireless-induced control-input distortions affect plant-state evolution, state estimation, and subsequent control actions.
Although the authors of \cite{sabag2023reducing} directly considered the \ac{lqg} cost, their approach represented communication solely by the quantity of the information transmitted from the transmitter to the controller.
Consequently, this abstraction does not allow the framework to characterize the effects of the channel conditions, noise, interference, or beamforming on closed-loop control performance.
Moreover, the authors of \cite{gan2026modeling} considered a \ac{jcc} system with one \ac{cu} and one plant, where state-reporting and control-input reconstruction errors were modeled using rate-distortion relationships rather than being directly derived from the received signals.
Furthermore, the point-to-point \ac{jcc} framework in \cite{leong2011optimal} was limited to a simplified single-control-loop scenario.
Consequently, the direct relationship between beamforming and long-term closed-loop control performance remains insufficiently characterized in multi-\ac{cu} systems with vector-valued control inputs.
Hence, a more comprehensive control metric is needed to directly connect beamforming to control-input recovery and plant-state evolution, while accounting for plant-state deviation, applied control effort, and the state-estimation error associated with the uplink state-feedback process.
Only with such a metric can communication and control functionalities be jointly designed to address both inter-user and inter-function interference while minimizing the long-term \ac{lqg} cost.
Motivated by the above, the main contributions of this paper can be summarized as follows:
\begin{itemize}
    \item We propose a \ac{mimo} \ac{jcc} framework in which a multi-antenna \ac{bs} simultaneously serves multiple \acp{cu} and transmits a vector-valued control input to a plant. 
    In the proposed system, the interference between the communication and control functionalities is explicitly modeled. 
    To establish the closed-loop control architecture, \ac{lmmse}-based downlink control-input recovery at the plant and \ac{kf}-based uplink state tracking at the \ac{bs} are incorporated.
    
    \item We derive the optimal finite- and infinite-horizon \ac{lqg} costs for the considered wireless closed-loop control system. 
    In contrast to conventional control metrics that abstract communication, the derived costs explicitly capture the wireless actuation distortion caused by interference and noise, as well as state-estimation errors in the feedback link. 
    Hence, the dependence of long-term control performance on beamforming is directly captured.

    \item For the vector control-input case, we formulate a \ac{jcc} beamforming problem to minimize the infinite-horizon \ac{lqg} cost subject to the communication \ac{qos} requirements and the transmit-power constraint. 
    Given the inherent nonconvexity of this optimization problem, an \ac{socp}-based \ac{sca} algorithm is developed with backtracking to obtain a suboptimal solution.

    \item For the scalar control-input case, we derive a closed-form expression for the infinite-horizon \ac{lqg} cost. 
    To characterize the communication-control trade-off, a simplified system with a single \ac{cu} and a single plant is considered. 
    In this setting, a \ac{jcc} beamforming problem is formulated to minimize the infinite-horizon control cost subject to the communication \ac{qos} requirement and the transmit-power constraint. 
    By exploiting the monotonicity of the infinite-horizon \ac{lqg} cost \ac{wrt} the control \ac{sinr}, this problem is optimally solved using an \ac{socp}-based bisection approach.

     \item Numerical results demonstrate: i) The derived \ac{lqg} costs for both the vector and scalar control-input cases closely match the corresponding Monte Carlo results; ii) the proposed \ac{kf}-based method accurately tracks the plant-state trajectory; and iii) the proposed algorithms consistently outperform the heuristic \ac{zf}-based benchmark, demonstrating that balancing inter-function interference is necessary, especially when the available spatial \ac{dofs} are limited.
\end{itemize}
\emph{Organization:} The rest of this paper is organized as follows.
Section~\ref{sect:system_model} presents the \ac{jcc} system model and characterizes the finite- and infinite-horizon \ac{lqg} costs.
Section~\ref{sect:beamforming_design_jcc_vector_valued} develops the beamforming design for vector-valued control inputs, while Section~\ref{sect:beamforming_design_jcc_scalar_valued} considers the scalar control-input case and characterizes the communication-control Pareto boundary.
Numerical results are provided in Section~\ref{sect:numerical_results}, and
conclusions are drawn in Section~\ref{sect:conclusions}.

\emph{Notations:} Throughout this paper, scalars are denoted by italic letters, while vectors and matrices are denoted by boldface lowercase and uppercase letters, respectively.
$\mathbb{C}^{M\times N}$ and $\mathbb{R}^{M \times N}$ denote the spaces of $M\times N$ complex- and real-valued matrices, respectively. 
$(\cdot)^{\textsf{T}}$, $(\cdot)^{\textsf{H}}$, $(\cdot)^{\mathscr{*}}$, and $(\cdot)^{-1}$ denote the transpose, conjugate-transpose, conjugate, and inverse operations, respectively.
$[\cdot]_{i,:}$, $[\cdot]_{:, j}$, and $[\cdot]_{i,j}$ denote the operations to extract the $i$-th row, the $j$-th column, and the $(i,j)$-th entry of a matrix, respectively.
Here, $\nabla$ denotes the gradient operator, and $\mathrm{vec}\{\cdot\}$ denotes the vectorization operator.
$\mathrm{j}=\sqrt{-1}$ and $\mathrm{e}$ denote the imaginary unit and Euler's number, respectively.
$\mathbf{0}_{M}$ and $\mathbf{0}_{M \times N}$ denote a $M \times 1$ zero vector and a $M \times N$ zero matrix, respectively. 
$\mathbb{E}[\cdot]$ represents statistical expectation, while $\Re\{\cdot\}$ and $\Im\{\cdot\}$ denote the real and imaginary parts of a complex number, respectively.
$\mathcal{CN}(\mu, \sigma^2)$ and $\mathcal{N}(\mu, \sigma^2)$ denote the complex Gaussian distribution and the normal distribution with mean $\mu$ and variance $\sigma^2$, respectively.
$\| \cdot \|_2$, $\| \cdot \|_F$, and $|\cdot|$ denote the $L_2$ norm, the Frobenius norm, and the absolute value, respectively.
The operator $\angle (\cdot)$ represents the extraction of the phase of a complex number.
Finally, $\mathcal{O}(\cdot)$ denotes the Big-O notation.

\section{System Model} \label{sect:system_model}
\begin{figure}[!t]
    \centering
    \includegraphics[width=0.85\linewidth]{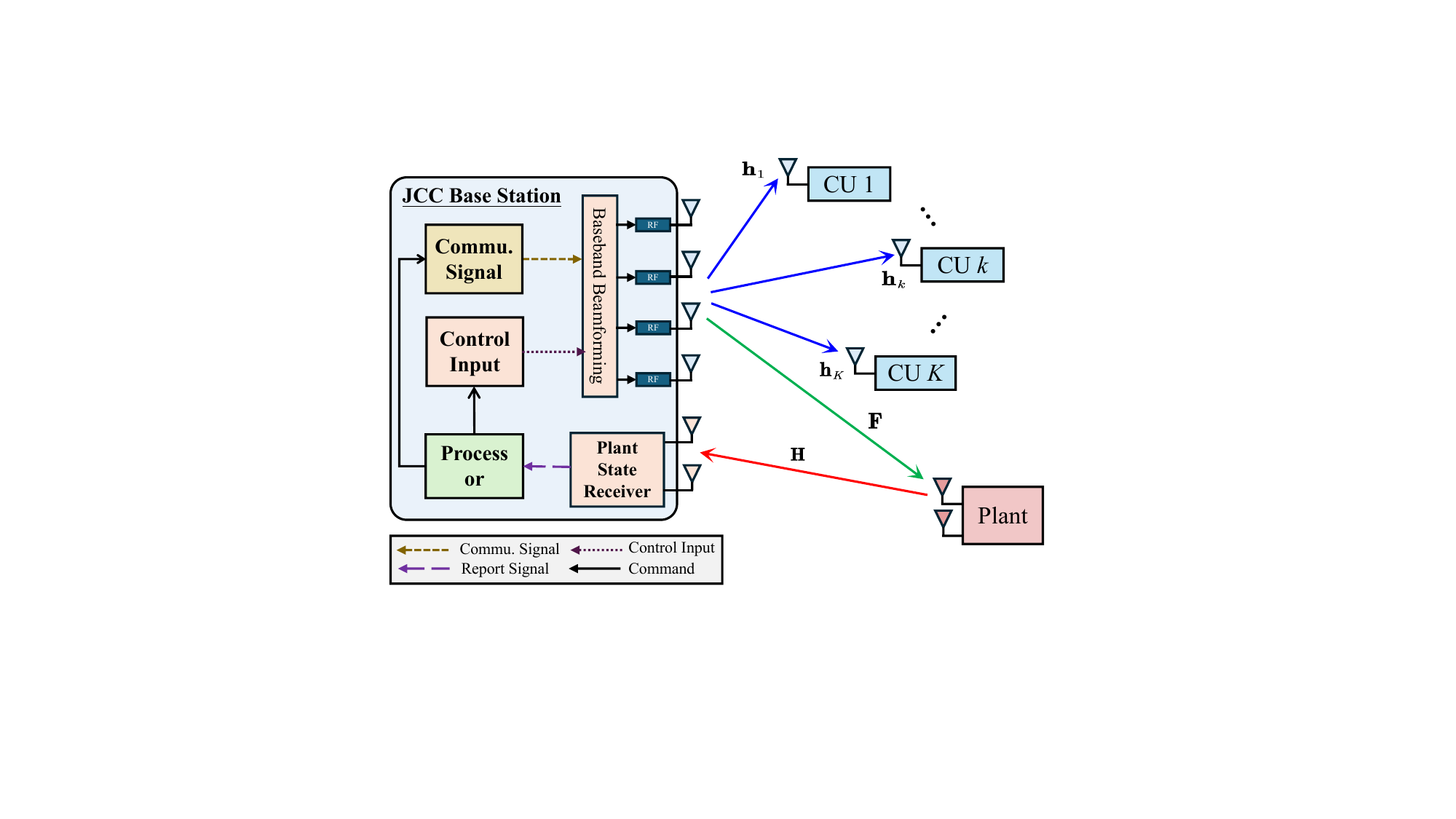} 
    \caption{Illustration of the JCC system model.}
    \label{fig:system_model}
\end{figure}
As illustrated in Fig. \ref{fig:system_model}, we consider a \ac{jcc} system, where $K$ single-antenna \acp{cu} are served in the downlink and a plant with an $M_{\rm p}$-antenna array is controlled by the \ac{bs}. 
To simultaneously transmit and receive, the \ac{bs} is equipped with an $M_{\rm t}$-antenna array for downlink transmission and an $M_{\rm r}$-antenna array for uplink reception.
In particular, the downlink signals comprise the communication symbols for the \acp{cu} and the control input to the plant, while the uplink transmission delivers the plant-state report from the plant to the BS.
For the communication link, the channel vector from the \ac{bs} to the $k$-th \ac{cu} is denoted by $\mathbf{h}_k \in \mathbb{C}^{M_{\rm t} \times 1}$, where $k\in\mathcal{K}\triangleq\{1,2,\dotsc, K\}$.
For the control link, the channel matrix from the \ac{bs} to the plant is denoted by $\mathbf{F} \in \mathbb{C}^{M_{\rm p} \times M_{\rm t}}$.
For the state-feedback link, the channel matrix from the plant to the \ac{bs} is denoted by $\mathbf{H}\in\mathbb{C}^{M_{\mathrm{r}} \times M_{\mathrm{p}}}$.
Prior to system modeling, we summarize the core assumptions used in this paper: i) The wireless channels are assumed to be quasi-static within each considered control horizon, and perfect \ac{csi} is available at both the \ac{bs} and the plant \cite{yang2014quasi,luo2015capacity}; ii) communication symbols, receiver noises, control-process noise, and the initial plant state are mutually independent, and all time-varying quantities are independent across time; iii) the uplink and downlink transmissions utilize orthogonal resources, thereby avoiding mutual interference \cite{lin2019joint}; and iv) the communication and control signals are zero-mean.

\subsection{Communication Model}
Let the time slot be indexed by $n$, where $n\in \mathcal{N}\triangleq\{1, \ldots, N\}$.
For each time slot, the \ac{bs} simultaneously transmits Gaussian communication symbols $s_{k, n} \sim \mathcal{CN}(0, 1)$ to the $K$ \acp{cu} and the control input $\mathbf{u}_n \in \mathbb{R}^{L \times 1}$ to the plant.\footnote{Following standard linear-control models \cite{chen2016optimal}, the control input is modeled as a real-valued vector. 
Alternatively, two real-valued entries of the control-input vector can be mapped to the in-phase and quadrature dimensions of a single complex symbol for more efficient transmission.
This extension is left for future work. }
Since the original control-input vector $\mathbf{u}_n \in \mathbb{R}^L$ generally does not have an identity covariance matrix, it is first normalized as $\tilde{\mathbf{u}}_n = \boldsymbol{\Pi}^{-1/2}_n\mathbf{u}_n$, where $\boldsymbol{\Pi}_n \triangleq \mathbb{E}\left[\mathbf{u}_n\mathbf{u}^\mathsf{T}_n\right]$.
Here, $\boldsymbol{\Pi}_n$ is obtained from the long-term second-order statistics of the control input and is treated as known during beamforming design~\cite{chen2016optimal,gattami2014multiobjective}.
The subscript $n$ indicates possible time variations of these statistics across a finite horizon. 
In the stationary infinite-horizon regime, $\boldsymbol{\Pi}_n$ converges to a stationary value $\boldsymbol{\Pi}$ as $n \rightarrow \infty$.
The resulting normalized control input $\tilde{\mathbf{u}}_n$ satisfies $\mathbb{E}\left[\tilde{\mathbf{u}}_n\tilde{\mathbf{u}}^{\mathsf{T}}_n\right] = \mathbf{I}_{L}$, so that the transmit power allocated to the control-input transmission is determined solely by the corresponding precoder.

Let $\mathbf w_{k} \in \mathbb C^{M_{\rm t}\times 1}$ denote the beamforming vector for the $k$-th CU, and let $\mathbf W_{\mathrm{p}}\in\mathbb C^{M_{\rm t}\times L}$ denote the precoding matrix for the control input.
Here, for valid spatial multiplexing and control-input recovery, we assume that $K \le M_{\rm t}$, $L \le M_{\rm p}$, and $L \le M_{\rm r}$ hold, and the corresponding effective channels have sufficient rank for signal recovery.
Consequently, the transmit signal can be expressed as follows:
\begin{align}
   \mathbf{s}_n \triangleq \sum\nolimits_{k=1}^K{\mathbf{w}_{k} s_{k,n}}+\mathbf{W}_{\mathrm{p}}\tilde{\mathbf{u}}_n,
\end{align}
where the overall transmit power satisfies $\sum\nolimits_{k=1}^K{\left\| \mathbf{w}_k \right\| _{2}^{2}}+\left\| \mathbf{W}_{\mathrm{p}} \right\| _{F}^{2}\le P_{\max}$, and $P_{\max}$ denotes the total transmit power at the \ac{bs}.
For the $k$-th \ac{cu}, the received signal is given by
\begin{align}
    y_{k,n} &=\mathbf{h}_{k}^{\mathsf{H}}\mathbf{s}_n +z_{k, n} =\mathbf{h}_{k}^{\mathsf{H}}\mathbf{w}_{k}s_{k,n} \notag \\
    &+\underset{\mathrm{Commun}.~\mathrm{Interf.}}{\underbrace{\sum\nolimits_{j=1,j\ne k}^K{\mathbf{h}_{k}^{\mathsf{H}}\mathbf{w}_{j}s_{j,n}}}}+\underset{\mathrm{Control}~\mathrm{Interf.}}{\underbrace{\mathbf{h}_{k}^{\mathsf{H}}\mathbf{W}_{\mathrm{p}}\tilde{\mathbf{u}}_n }}+z_{k,n}, \label{eq:received_signal}
\end{align}
where $z_{k,n} \sim \mathcal{CN}(0, \sigma_{\mathrm{c}, k}^2)$ denotes additive Gaussian noise with variance $\sigma_{\mathrm{c}, k}^2$.
For brevity, we let $\sigma_{\mathrm{c},k}^2 = \sigma_{\mathrm{c}}^2$, $\forall k$.
Accordingly, the \ac{sinr} at CU $k$ can be expressed as follows:
\begin{align}
    &\gamma _k (\mathbf{W}_{\mathrm{c}}, \mathbf{W}_{\mathrm{p}})=\frac{\left| \mathbf{h}_{k}^{\mathsf{H}}\mathbf{w}_{k} \right|^2}{\sum\nolimits_{j=1,j\ne k}^K{\left| \mathbf{h}_{k}^{\mathsf{H}}\mathbf{w}_{j} \right|^2}+\left\| \mathbf{h}_{k}^{\mathsf{H}}\mathbf{W}_{\mathrm{p}} \right\|_2 ^2+\sigma _{\mathrm{c}}^{2}}, \label{eq:sinr}
\end{align}
where $\mathbf{W}_{\mathrm{c}} \triangleq [\mathbf{w}_{1}, \dots, \mathbf{w}_{K}] \in \mathbb{C}^{M_{\rm t} \times K}$ collects the beamforming vectors for the $K$ \acp{cu}.
The communication \ac{sinr} is used as the performance metric for the communication functionality.
As shown in \eqref{eq:received_signal} and \eqref{eq:sinr}, the interference experienced by each \ac{cu} is caused by the signals intended for the other \acp{cu} and by the control signal intended for the plant.

\subsection{Control Model} \label{subsect:vector_control_input}
For the plant, there are two transmission directions: The downlink control-input transmission from the \ac{bs} to the plant and the uplink state-feedback transmission from the plant to the \ac{bs}.
In this subsection, we first describe the transmission of the control input and its effect on the plant-state evolution. 
Next, we explain how the updated plant state is reported to and tracked by the \ac{bs} for subsequent control-input generation. 
Finally, we introduce the control-performance metric.

\subsubsection{Downlink Control-Input Transmission}
Let $\mathbf{z}_n \sim \mathcal{CN}(\boldsymbol{0}_{M_{\rm p}}, \sigma_{\rm p}^2\mathbf{I}_{M_{\mathrm{p}}})$ be the additive Gaussian noise at the plant.
The received downlink signal at the plant during the $n$-th time slot can be expressed as follows:
\begin{align}
    \mathbf{r}_n&=\mathbf{F}\mathbf{s}_n+\mathbf{z}_n=\mathbf{F}\mathbf{W}_{\mathrm{p}}\tilde{\mathbf{u}}_n + \underset{\mathrm{Commun}.~\mathrm{Interf}.}{\underbrace{\sum\nolimits_{k=1}^K{\mathbf{F}\mathbf{w}_{k}s_{k,n}}}} +\mathbf{z}_n\notag \\ &=\mathbf{F}\mathbf{W}_{\mathrm{p}}\tilde{\mathbf{u}}_n+\mathbf{d}_n, \label{eq:plant_observation_vector}
\end{align}
where $\mathbf{d}_n\triangleq \sum\nolimits_{k=1}^K{\mathbf{F}\mathbf{w}_{k}s_{k,n}} + {\mathbf{z}}_n$ collects the interference and noise.
Under the zero-mean assumption, the mean of $\mathbf{d}_n$ is given by $\mathbb{E}\left[\mathbf{d}_n \right] = \boldsymbol{0}_{M_{\mathrm{p}}}$.
Based on the statistical independence assumption, the covariance matrix of $\mathbf{d}_n$ can be computed as $\mathbf{R}_{d} \triangleq \mathbb{E}[\mathbf{d}_n \mathbf{d}_n^{\mathsf{H}}]=\sum\nolimits_{k=1}^K{\mathbf{F}} \mathbf{W}_{k}\mathbf{F}^{\mathsf{H}}+\sigma _{\mathrm{p}}^{2}\mathbf{I}_{M_{b\mathrm{p}}}$, where $\mathbf{W}_{k} \triangleq \mathbf{w}_{k}^{}\mathbf{w}_{k}^{\mathsf{H}}$.
Since the control-input vector is real-valued, we stack the real and imaginary parts of the received signal in \eqref{eq:plant_observation_vector}.
The resulting real-valued expressions are given by
\begin{equation}
\begin{aligned}
     &\breve{\mathbf{r}}_n\triangleq \left[ \begin{array}{c}
	\Re \left\{ \mathbf{r}_n \right\}\\
	\Im \left\{ \mathbf{r}_n \right\}\\
\end{array} \right]\in \mathbb{R} ^{2M_{\mathrm{p}}\times 1}, ~~ \breve{\mathbf{d}}_n\triangleq \left[ \begin{array}{c}
	\Re \left\{ \mathbf{d}_n \right\}\\
	\Im \left\{ \mathbf{d}_n \right\}\\
\end{array} \right] \in \mathbb{R} ^{2M_{\mathrm{p}}\times 1} \\
    &\breve{\mathbf{F}}\triangleq \left[ \begin{array}{c}
	\Re \left\{ \mathbf{FW}_{\mathrm{p}} \right\}\\
	\Im \left\{ \mathbf{FW}_{\mathrm{p}} \right\}\\
\end{array} \right] \in \mathbb{R} ^{2M_{\mathrm{p}}\times L},   \\
    &\breve{\mathbf{R}}_{d}\triangleq\frac{1}{2}\left[ \begin{matrix}
	\Re \left\{ \mathbf{R}_{d} \right\}&		-\Im \left\{ \mathbf{R}_{d} \right\}\\
	\Im \left\{ \mathbf{R}_{d} \right\}&		\Re \left\{ \mathbf{R}_{d} \right\}\\
\end{matrix} \right] \in \mathbb{R} ^{2M_{\mathrm{p}}\times 2M_{\mathrm{p}}}. 
\end{aligned}\label{eq:real_valued_definitions}
\end{equation}
Here, the factor $1/2$ in $\breve{\mathbf{R}}_{d}$ arises, since the variance of the complex Gaussian disturbance is equally divided between its real and imaginary parts.
Applying these definitions, the received control input at the plant can be rewritten as follows:
\begin{align}
    \breve{\mathbf{r}}_n=\breve{\mathbf{F}} \tilde{\mathbf{u}}_n+\breve{\mathbf{d}}_n.
\end{align}
To extract the normalized control input $\tilde{\mathbf{u}}_n$, the \ac{lmmse} combiner $\mathbf{G} = \breve{\mathbf{F}}^{\mathsf{T}}(\breve{\mathbf{F}}\breve{\mathbf{F}}^{\mathsf{T}} + \breve{\mathbf{R}}_{d})^{-1}$ is applied.
Therefore, the estimated normalized control input $\hat{\tilde{\mathbf{u}}}_n$ is given by 
\begin{align}
    \hat{\tilde{\mathbf{u}}}_n=\mathbf{G}\breve{\mathbf{F}}\tilde{\mathbf{u}}_n+\mathbf{G}\breve{\mathbf{d}}_n=\mathbf{M}\tilde{\mathbf{u}}_n+\boldsymbol{\vartheta }_n, \label{eq:estimated_normalized_control_input}
\end{align}
where $\mathbf{M}\triangleq \mathbf{G}\breve{\mathbf{F}}$ denotes the normalized recovery matrix and $\boldsymbol{\vartheta }_n\triangleq \mathbf{G}\breve{\mathbf{d}}_n$ denotes the additive actuation disturbance induced by the interference and receiver noise.
The covariance of this disturbance is defined as $\mathbb{E} \left[ \boldsymbol{\vartheta }_{n}\boldsymbol{\vartheta }_{n}^{\mathsf{T}} \right] = \mathbf{G}\breve{\mathbf{R}}_{d}\mathbf{G}^{\mathsf{T}}$.
Recall that $\tilde{\mathbf{u}}_n = \boldsymbol{\Pi}^{-1/2}_n\mathbf{u}_n$.
The estimate of the original control input is recovered by rescaling its normalized estimate in \eqref{eq:estimated_normalized_control_input}.
Hence, the resulting estimated control input can be expressed as follows:
\begin{align}
    \hat{\mathbf{u}}_n&=\mathbf{\Pi }_{n}^{1/2}\hat{\tilde{\mathbf{u}}}_n \notag \\
    &=\mathbf{\Pi }_{n}^{1/2}\mathbf{M}\mathbf{\Pi }_{n}^{-1/2}\mathbf{u}_n+\mathbf{\Pi }_{n}^{1/2}\boldsymbol{\vartheta }_n=\bar{\mathbf{M}}_n\mathbf{u}_n+\mathbf{e}_n, \label{eq:received_control_command_signal}
\end{align}
where $\bar{\mathbf{M}}_n\triangleq \boldsymbol{\Pi }_{n}^{1/2}\mathbf{M\Pi }_{n}^{-1/2}$ denotes the denormalized recovery matrix and $\mathbf{e}_n\triangleq \mathbf{\Pi }_{n}^{1/2}\boldsymbol{\vartheta }_n$ denotes the additive actuation disturbance after the denormalization.
The covariance of $\mathbf{e}_n$ is characterized by $\mathbf{\Sigma }_{e,n}\triangleq \mathbb{E} \left[ \mathbf{e}_n\mathbf{e}_{n}^{\mathsf{T}} \right] =\mathbf{\Pi }_{n}^{1/2}\mathbf{G}\breve{
\mathbf{R}}_{d}\mathbf{G}_{}^{\mathsf{T}}\mathbf{\Pi }_{n}^{1/2}$, while its mean is given by $\mathbb{E}[\mathbf{e}_n]=\boldsymbol{0}_L$.

Once $\hat{\mathbf{u}}_n$ is obtained, the plant updates its state.
According to \cite{sahai2006necessity} and \cite{sahai2007necessity}, we model the plant control process as a discrete-time vector-valued stochastic linear system.
More specifically, when the plant applies the recovered control input $ \hat{\mathbf{u}}_n$, the state-evolution equation from time slot $n$ to time slot $n+1$ is given by
\begin{align}
    \mathbf{x}_{n+1} = \mathbf{A}\mathbf{x}_n + \mathbf{B} \hat{\mathbf{u}}_n + \mathbf{v}_n, \label{eq:control_equation}
\end{align}
where $\mathbf{x}_{n} \in \mathbb{R}^{L \times 1}$ represents the plant state at time slot $n$, $\mathbf{A} \in \mathbb{R}^{L \times L}$ and $ \mathbf{B} \in \mathbb{R}^{L \times L}$ are the real-valued constant matrices that characterize how the plant state evolves in response to different actions, and $\mathbf{v}_n \sim \mathcal{N}(\boldsymbol{0}_L, \boldsymbol{\Sigma}_v)$ denotes the additive Gaussian control-process noise.
The covariance of the noise term is given by $\boldsymbol{\Sigma}_v = \mathbb{E} \left[ \mathbf{v}_n\mathbf{v}_{n}^{\mathsf{T}} \right] =\sigma _{v}^{2}\mathbf{I}_L$.
According to \eqref{eq:received_control_command_signal}, the state-evolution equation can be further expressed as follows:
\begin{align}
    \mathbf{x}_{n+1} = \mathbf{A}\mathbf{x}_n + \mathbf{B}\bar{\mathbf{M}}_n{\mathbf{u}}_n + \mathbf{B}\mathbf{e}_n + \mathbf{v}_n. \label{eq:status_evolution}
\end{align}
\begin{remark}
    (Factors Affecting the Control Process) \emph{Based on \eqref{eq:status_evolution}, the factors affecting the control process fall into two categories: 
    \textbf{i) Transmission Imperfections}: Due to the presence of \acp{cu}, the communication signals can interfere with the control-input transmission, as indicated by \eqref{eq:plant_observation_vector}. 
    In addition, the additive Gaussian noise at the receiver further distorts the received control inputs.
    \textbf{ii) Control-Model Imperfections}: Due to plant uncertainty and unmodeled dynamics, the plant state evolution is also affected by the process noise originating from sources distinct from the additive noise on the communication side. }
\end{remark}
\begin{figure*}[t!]
    \centering
    \includegraphics[width=0.65\linewidth]{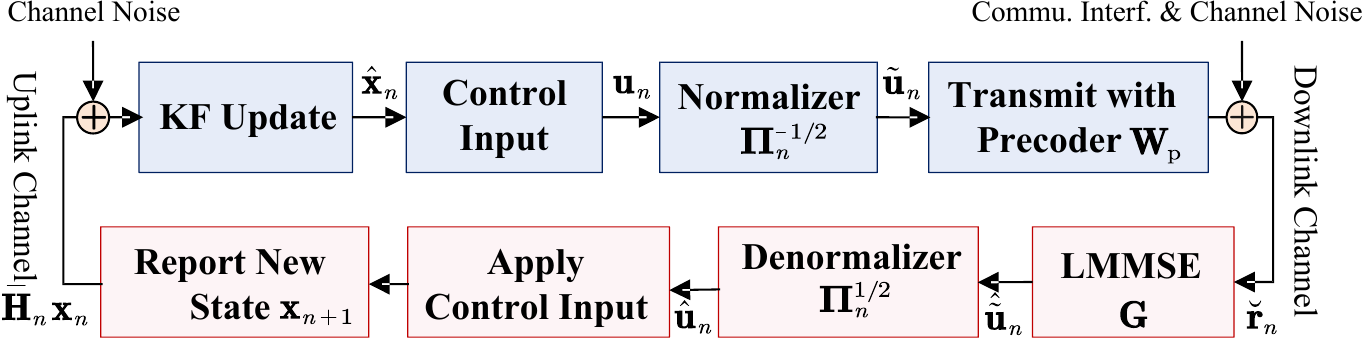}
    \caption{Illustration of the closed-loop control process in the JCC system.}
    \label{fig:close_loop_control}
\end{figure*}
\subsubsection{Uplink Plant-State Reporting} \label{sect:uplink_modeling}
To generate appropriate control inputs for the upcoming time slot, the \ac{bs} needs to know the plant's updated state $\mathbf{x}_{n+1}$, which is obtained from the uplink state-feedback mechanism.
According to \eqref{eq:control_equation}, the updated plant state is $\mathbf{x}_{n+1}$ after $\hat{\mathbf{u}}_n$ has been applied.
Similar to the control input transmission, the plant state is first normalized as $\tilde{\mathbf{x}}_{n+1} = \boldsymbol{\Omega}_{n+1}^{-1/2} \mathbf{x}_{n+1}$, where $\boldsymbol{\Omega}_{n+1} \triangleq \mathbb{E}\left[ \mathbf{x}_{n+1} \mathbf{x}_{n+1}^{\mathsf{T}}\right]$.
Similar to $\boldsymbol{\Pi}_n$, $\boldsymbol{\Omega}_{n+1}$ is treated as known during beamforming design, and may vary across time slots in the finite-horizon case, while $\boldsymbol{\Omega}_{n}$ converges to a stationary value $\boldsymbol{\Omega}$ as $n \rightarrow \infty$.
Then, the plant transmits the normalized state vector $\tilde{\mathbf{x}}_{n+1}$ to the \ac{bs} over the uplink channel $\mathbf{H}$ using the precoding matrix $\mathbf{U}$.

Hence, the received signal at the \ac{bs} is given by
\begin{align}
    \mathbf{c}_{n+1}=\bar{\mathbf{H}}_{n+1}\mathbf{x}_{n+1}+\mathbf{n}_{n+1}, \label{eq:uplink_state_report}
\end{align}
where $\bar{\mathbf{H}}_{n+1}\triangleq \sqrt{P_{\rm p}}\mathbf{H}\mathbf{U}\mathbf{\Omega }_{n+1}^{-1/2}$, $P_{\rm p}$ denotes total transmit power at the plant, and $\mathbf{n}_{n+1} \sim \mathcal{CN}(\boldsymbol{0}_{M_{\mathrm{r}}}, \sigma_{\rm f}^2 \mathbf{I}_{M_{\mathrm{r}}})$ denotes the additive Gaussian noise.
In this work, we adopt \ac{svd}-based eigenmode precoding, i.e., $\mathbf{U}=L^{-1/2}\left[ \mathbf{V}_{\mathrm{t}} \right] _{:,1:L}$.
Here, $\mathbf{H}=\mathbf{V}_{\mathrm{r}}\mathbf{\Lambda V}_{\mathrm{t}}^{\mathsf{H}}$ denotes the \ac{svd} of $\mathbf{H}$, where the singular values are arranged in descending order, and the corresponding singular vectors are ordered accordingly.
According to the adopted link-separation assumption, the uplink state-reporting transmission does not interfere with the downlink transmission.

To track the plant's state across time slots, a \ac{kf} approach is adopted at the \ac{bs} \cite{leong2011optimal}.
After receiving the uplink state report $\mathbf{c}_{n+1}$, the \ac{bs} updates the plant-state estimate and obtains the posterior estimate denoted by $\hat{\mathbf{x}}_{n+1} \triangleq \hat{\mathbf{x}}_{n+1\mid n+1}$.
The estimated state $\hat{\mathbf{x}}_{n+1}$ is subsequently utilized to generate the control input $\mathbf{u}_{n+1}$.
The detailed \ac{kf} implementation is provided in Appendix \ref{appendix:implementation_of_kf}.
For ease of understanding, Fig. \ref{fig:close_loop_control} summarizes the closed-loop control process.

To quantify the uncertainty caused by the \ac{kf} module, we define the \ac{kf} state-estimation error in what follows. 
Let $n$ denote a general time slot. 
The corresponding state-estimation error is defined as $\boldsymbol{\epsilon}_n \triangleq \mathbf{x}_n - \hat{\mathbf{x}}_n$. 
Its covariance matrix is given by $\mathbf{C}_n \triangleq \mathbb{E}[\boldsymbol{\epsilon}_n\boldsymbol{\epsilon}_n^{\mathsf{T}}]$.
Based on the orthogonality property of the \ac{kf} estimate, the estimation error is uncorrelated from the estimated state, i.e., $\mathbb{E}[\boldsymbol{\epsilon}_n\hat{\mathbf{x}}_n^{\mathsf{T}}]$ is a zero matrix.
Therefore, the relationship between the true plant state and the estimated plant state is given by
\begin{align}
    \mathbf{x}_n = \hat{\mathbf{x}}_n + \boldsymbol{\epsilon}_n, \notag 
\end{align}
where the true plant state can be viewed as the \ac{kf} estimate perturbed by an additive estimation error, capturing the state uncertainty remaining after \ac{kf}-based estimation.

\subsubsection{Performance Metric of Control}
On the control side, the objective of the \ac{bs} is to ensure stable plant operation.
To quantify the long-term control performance, we leverage the \ac{lqg} framework \cite{van1981certainty}.
For a finite horizon of $N$ time slots, the finite-horizon cost is defined as follows:
\begin{align}
    &J_N(\mathbf{W}_{\mathrm{c}},\mathbf{W}_{\mathrm{p}})=\notag \\
    &\qquad \quad \mathbb{E} \left[ \sum\nolimits_{n=1}^{N-1}{\left( \mathbf{x}_{n}^{\mathsf{T}}\mathbf{Qx}_{n}^{}+\hat{\mathbf{u}}_{n}^{\mathsf{T}}\mathbf{D}\hat{\mathbf{u}}_{n}^{} \right) +\mathbf{x}_{N}^{\mathsf{T}}\mathbf{Sx}_{N}^{}} \right], \label{eq:control_cost_finite}
\end{align}
where $\mathbf{Q} \succeq 0$ penalizes the deviation of the physical state from the desired equilibrium, which is set to $\mathbf{0}_L$, $\mathbf{D} \succeq 0 $ penalizes the actually executed control inputs, i.e., the control inputs received at the plant via wireless transmission, and $\mathbf{S} \succeq 0$ penalizes the terminal state, which is treated separately because no further control action is applied after the terminal time.
To stabilize the control process, the optimal control inputs $\{ \mathbf{u}_{n, \star} \}_{n=1}^{N-1}$ need to be derived to minimize the finite-horizon \ac{lqg} cost in \eqref{eq:control_cost_finite}.
Based on the channel-coherence assumption in Section~\ref{sect:system_model}, the beamforming matrices $\{\mathbf{W}_{\rm c}, \mathbf{W}_{\rm p}\}$ remain fixed over the $N$ time slots.
The finite-horizon control cost over $N$ time slots is characterized in the following theorem:
\begin{theorem} \label{theorem:finite_horizon}
    (Finite-horizon Control Cost) For a finite control process lasting $N$ steps, the optimal finite-horizon control cost is given by
    \begin{align}
        J_{N,\star} & =\mathbb{E} \left[ \mathbf{x}_{1}^{\mathsf{T}}\boldsymbol{\Theta}_1\mathbf{x}_{1}^{} \right] +\sum\nolimits_{n=1}^{N-1}{\mathrm{tr}\left\{ \boldsymbol{\Theta}_{n+1}\mathbf{\Sigma }_v \right\}}\notag \\
        &+\sum\nolimits_{n=1}^{N-1}{\mathrm{tr}\left\{ \left( \mathbf{D}+\mathbf{B}^{\mathsf{T}}\boldsymbol{\Theta}_{n+1}\mathbf{B} \right) \mathbf{\Sigma }_{e,n} \right\}} \notag \\
        &+ \sum\nolimits_{n=1}^{N-1}{\mathrm{tr}\left\{ \mathbf{K}_{n}^{\mathsf{T}}\mathbf{\Phi }_n\mathbf{K}_n\mathbf{C}_n \right\}}, \label{eq:min_control_cost}
    \end{align}
    where $\boldsymbol{\Theta}_n$ is the Riccati matrix for time slot $n$.   
    Additionally, the auxiliary matrix and the finite-horizon control gain matrix are respectively defined as follows:
    \begin{align}
        &\mathbf{\Phi }_{n}^{}\triangleq \bar{\mathbf{M}}_n^{\mathsf{T}}\left( \mathbf{D}+\mathbf{B}^{\mathsf{T}}\boldsymbol{\Theta}_{n+1}\mathbf{B} \right) \bar{\mathbf{M}}_n^{}, \notag \\
        &\mathbf{K}_n \triangleq \mathbf{\Phi }_{n}^{-1}\bar{\mathbf{M}}_n^{\mathsf{T}}\mathbf{B}^{\mathsf{T}}\boldsymbol{\Theta}_{n+1}\mathbf{A}. \notag
    \end{align}
    To evaluate $J_{N,\star}$, the Riccati matrix $\{\boldsymbol{\Theta}_n\}_{n=1}^{N-1}$ can be obtained from the following backward Riccati recursion:
    \begin{align}   
    \boldsymbol{\Theta}_n&=\mathbf{Q}+\mathbf{A}^{\mathsf{T}}\boldsymbol{\Theta}_{n+1}\mathbf{A}^{}-\mathbf{A}^{\mathsf{T}}\boldsymbol{\Theta}_{n+1}\mathbf{B}\bar{\mathbf{M}}_n^{}\mathbf{\Phi }_{n}^{-1}\bar{\mathbf{M}}_n^{\mathsf{T}}\mathbf{B}^{\mathsf{T}}\boldsymbol{\Theta}_{n+1}\mathbf{A}, \label{eq:riccati_matrix}
    \end{align}   
    where the terminal condition is specified by $\boldsymbol{\Theta}_N = \mathbf{S}$.
    To achieve the minimal control cost in \eqref{eq:min_control_cost}, the corresponding control inputs for a finite horizon $N$ are given by
    \begin{align}
        \mathbf{u}_{n,\star}=-\mathbf{K}_n \hat{\mathbf{x}}_n,\quad \textrm{for}~n=1,\ldots,N-1. \label{eq:optimal_control_input_finite}
    \end{align}
\end{theorem}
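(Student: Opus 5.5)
We will prove the theorem by backward dynamic programming combined with the separation principle for LQG control. The plan is to show by backward induction on $n$ that the optimal cost-to-go from slot $n$, conditioned on the BS information set $\mathcal{I}_n$ (all uplink reports up to $n$ and past inputs), has the quadratic-plus-constant form
\begin{align}
V_n(\mathcal{I}_n)=\mathbb{E}\bigl[\mathbf{x}_n^{\mathsf{T}}\boldsymbol{\Theta}_n\mathbf{x}_n\mid\mathcal{I}_n\bigr]+c_n, \notag
\end{align}
with $c_N=0$ and $\boldsymbol{\Theta}_N=\mathbf{S}$. The base case is immediate from the terminal term of \eqref{eq:control_cost_finite}. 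The result then follows by taking expectations at $n=1$ and unrolling the constants.

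\textbf{Inductive step.} Assume the form holds at $n+1$. Substitute \eqref{eq:status_evolution} into $\mathbf{x}_{n+1}^{\mathsf{T}}\boldsymbol{\Theta}_{n+1}\mathbf{x}_{n+1}$ and $\hat{\mathbf{u}}_n=\bar{\mathbf{M}}_n\mathbf{u}_n+\mathbf{e}_n$ into the stage cost. The noises $\mathbf{e}_n$ and $\mathbf{v}_n$ are zero-mean and independent of $(\mathbf{x}_n,\mathbf{u}_n)$ and of each other, so all cross terms vanish. The stage cost plus the expected next-stage value therefore becomes
\begin{align}
&\mathbb{E}\bigl[\mathbf{x}_n^{\mathsf{T}}(\mathbf{Q}+\mathbf{A}^{\mathsf{T}}\boldsymbol{\Theta}_{n+1}\mathbf{A})\mathbf{x}_n+2\mathbf{u}_n^{\mathsf{T}}\bar{\mathbf{M}}_n^{\mathsf{T}}\mathbf{B}^{\mathsf{T}}\boldsymbol{\Theta}_{n+1}\mathbf{A}\mathbf{x}_n+\mathbf{u}_n^{\mathsf{T}}\boldsymbol{\Phi}_n\mathbf{u}_n\mid\mathcal{I}_n\bigr]\notag\\
&\quad+\mathrm{tr}\{(\mathbf{D}+\mathbf{B}^{\mathsf{T}}\boldsymbol{\Theta}_{n+1}\mathbf{B})\boldsymbol{\Sigma}_{e,n}\}+\mathrm{tr}\{\boldsymbol{\Theta}_{n+1}\boldsymbol{\Sigma}_v\}+c_{n+1}. \notag
\end{align}
Because $\mathbf{u}_n$ is $\mathcal{I}_n$-measurable, the cross term only involves $\mathbb{E}[\mathbf{x}_n\mid\mathcal{I}_n]=\hat{\mathbf{x}}_n$. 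Completing the square in $\mathbf{u}_n$ (with $\boldsymbol{\Phi}_n\succ0$) gives the minimizer $\mathbf{u}_{n,\star}=-\mathbf{K}_n\hat{\mathbf{x}}_n$, and the residual quadratic is $-\hat{\mathbf{x}}_n^{\mathsf{T}}\mathbf{K}_n^{\mathsf{T}}\boldsymbol{\Phi}_n\mathbf{K}_n\hat{\mathbf{x}}_n$.

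\textbf{Restoring the quadratic form.} We rewrite $\hat{\mathbf{x}}_n^{\mathsf{T}}\mathbf{P}\hat{\mathbf{x}}_n=\mathbb{E}[\mathbf{x}_n^{\mathsf{T}}\mathbf{P}\mathbf{x}_n\mid\mathcal{I}_n]-\mathrm{tr}\{\mathbf{P}\mathbf{C}_n\}$, using the orthogonality $\mathbf{x}_n=\hat{\mathbf{x}}_n+\boldsymbol{\epsilon}_n$. Here we need that the KF error covariance does not depend on past controls, which holds because the filter is linear-Gaussian and the known input is subtracted in the prediction. Since $\mathbf{K}_n^{\mathsf{T}}\boldsymbol{\Phi}_n\mathbf{K}_n=\mathbf{A}^{\mathsf{T}}\boldsymbol{\Theta}_{n+1}\mathbf{B}\bar{\mathbf{M}}_n\boldsymbol{\Phi}_n^{-1}\bar{\mathbf{M}}_n^{\mathsf{T}}\mathbf{B}^{\mathsf{T}}\boldsymbol{\Theta}_{n+1}\mathbf{A}$, the quadratic matrix is exactly \eqref{eq:riccati_matrix}. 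The constant picks up the extra term $+\mathrm{tr}\{\mathbf{K}_n^{\mathsf{T}}\boldsymbol{\Phi}_n\mathbf{K}_n\mathbf{C}_n\}$, which closes the induction. Summing the constants $c_1$ and adding $\mathbb{E}[\mathbf{x}_1^{\mathsf{T}}\boldsymbol{\Theta}_1\mathbf{x}_1]$ yields \eqref{eq:min_control_cost}.

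\textbf{Main obstacle.} The delicate point is justifying separation, i.e.\ that $\mathbf{C}_n$ and $\boldsymbol{\Sigma}_{e,n}$ do not depend on the control policy. $\boldsymbol{\Sigma}_{e,n}$ depends on $\boldsymbol{\Pi}_n$, which is itself a second-order statistic of the control input. Our plan is to handle this as the paper does: treat $\boldsymbol{\Pi}_n$ and $\boldsymbol{\Omega}_n$ as fixed known normalizations during the optimization, so that $\bar{\mathbf{M}}_n$ and $\boldsymbol{\Sigma}_{e,n}$ are exogenous and the dynamic program above is valid. For the KF error covariance, we will show from Appendix~\ref{appendix:implementation_of_kf} that its recursion involves only $\mathbf{A}$, $\mathbf{B}\boldsymbol{\Sigma}_{e,n}\mathbf{B}^{\mathsf{T}}+\boldsymbol{\Sigma}_v$, and $\bar{\mathbf{H}}_{n}$, and not $\mathbf{u}_n$. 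We will also check invertibility of $\boldsymbol{\Phi}_n$, which follows from the assumed rank conditions on $\mathbf{M}$ together with $\mathbf{D}\succeq0$, $\boldsymbol{\Theta}_{n+1}\succeq0$ and $\mathbf{B}$ nonsingular (or $\mathbf{D}\succ0$).
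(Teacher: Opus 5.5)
Your proposal is correct and follows essentially the paper's proof: the same backward Bellman recursion, completion of the square giving $\mathbf{u}_{n,\star}=-\mathbf{K}_n\hat{\mathbf{x}}_n$, and conversion of the residual quadratic through $\mathbf{C}_n$ into the extra term $\mathrm{tr}\{\mathbf{K}_n^{\mathsf{T}}\boldsymbol{\Phi}_n\mathbf{K}_n\mathbf{C}_n\}$, with your value function $\mathbb{E}[\mathbf{x}_n^{\mathsf{T}}\boldsymbol{\Theta}_n\mathbf{x}_n\mid\mathcal{I}_n]+c_n$ coinciding with the paper's $\hat{\mathbf{x}}_n^{\mathsf{T}}\boldsymbol{\Theta}_n\hat{\mathbf{x}}_n+\mathrm{tr}\{\boldsymbol{\Theta}_n\mathbf{C}_n\}+c_n$ (conditioning on the full information set $\mathcal{I}_n$ rather than on $\hat{\mathbf{x}}_n$ alone is, if anything, the more careful choice). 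One small correction to your side remark on invertibility: $\mathbf{D}\succeq\mathbf{0}$, $\boldsymbol{\Theta}_{n+1}\succeq\mathbf{0}$ and nonsingular $\mathbf{B}$ do not make $\boldsymbol{\Phi}_n$ invertible (take $\mathbf{D}=\mathbf{S}=\mathbf{0}$ at $n=N-1$), so you need $\mathbf{D}+\mathbf{B}^{\mathsf{T}}\boldsymbol{\Theta}_{n+1}\mathbf{B}\succ\mathbf{0}$ (e.g., $\mathbf{D}\succ\mathbf{0}$) in addition to invertibility of $\bar{\mathbf{M}}_n$; the theorem tacitly assumes this condition by writing $\boldsymbol{\Phi}_n^{-1}$.
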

\begin{IEEEproof}
    See Appendix \ref{appendix:proof_finite_horizon}.
\end{IEEEproof}
To avoid horizon-dependent effects, we further consider the infinite-horizon average cost, which is obtained from the following equation:
\begin{align}
    J_{\infty}(\mathbf{W}_{\mathrm{c}},\mathbf{W}_{\mathrm{p}})=\lim_{N\rightarrow \infty} \frac{1}{N}J_{N, \star}(\mathbf{W}_{\mathrm{c}},\mathbf{W}_{\mathrm{p}}). \label{eq:infinite_lqr_cost}
\end{align}
We now consider a stationary control regime in which the second-order statistics of the control input and plant state converge to stationary matrices, i.e., $\boldsymbol{\Pi}_n\rightarrow\boldsymbol{\Pi}$ and $\boldsymbol{\Omega}_n\rightarrow\boldsymbol{\Omega}$, respectively.
Their stationary values are treated as fixed in the subsequent infinite-horizon control-cost characterization and beamforming design.
Thus, we have $\bar{\mathbf{M}}_n\rightarrow\bar{\mathbf{M}}$ and $\boldsymbol{\Sigma}_{e,n}\rightarrow\boldsymbol{\Sigma}_e$.
Under the standard stabilizability condition, the \ac{kf} estimation-error covariance also converges to its steady-state value, i.e., $\mathbf{C}_n\rightarrow\mathbf{C}$, which is supported by \cite{chan1984convergence}.
Building on the above, the infinite-horizon average cost in \eqref{eq:infinite_lqr_cost} can be characterized by the following theorem:
\begin{theorem}\label{theorem:infinite_horizon}
    (Infinite-horizon Average Control Cost) For an infinite-horizon control process, the optimal average control cost is given by
    \begin{align}
        J_{\infty}\left( \mathbf{W}_{\mathrm{c}},\mathbf{W}_{\mathrm{p}} \right) &=\mathrm{tr}\left\{ \boldsymbol{\Theta}\mathbf{\Sigma }_v \right\} +\mathrm{tr}\left\{ \left( \mathbf{D}+\mathbf{B}^{\mathsf{T}}\boldsymbol{\Theta}\mathbf{B} \right) \mathbf{\Sigma }_e \right\} \notag \\
        &+\mathrm{tr}\left\{ \mathbf{K}^{\mathsf{T}}\boldsymbol{\Phi} \mathbf{K} \mathbf{C}\right\} , \label{eq:control_cost_infinite_horizon}
    \end{align}
    where $\boldsymbol{\Theta}$ is the steady-state Riccati matrix.
    Additionally, the auxiliary matrix and the infinite-horizon control gain matrix are respectively defined as follows:
    \begin{align}
         \mathbf{\Phi }&\triangleq \bar{\mathbf{M}}^{\mathsf{T}}\left( \mathbf{D}+\mathbf{B}^{\mathsf{T}}\boldsymbol{\Theta}\mathbf{B} \right) \bar{\mathbf{M}},\notag \\
         \mathbf{K} &\triangleq \mathbf{\Phi }_{}^{-1}\bar{\mathbf{M}}^{\mathsf{T}}\mathbf{B}^{\mathsf{T}}\boldsymbol{\Theta}_{}\mathbf{A}. \notag 
    \end{align}
    To evaluate $J_\infty$, the steady-state Riccati matrix can be obtained from the infinite-horizon \ac{dare} given by 
    \begin{align}
        \boldsymbol{\Theta}&=\mathbf{Q}+\mathbf{A}^{\mathsf{T}}\boldsymbol{\Theta}\mathbf{A}-\mathbf{A}^{\mathsf{T}}\boldsymbol{\Theta}\mathbf{B}\bar{\mathbf{M}}\mathbf{\Phi }^{-1}\bar{\mathbf{M}}^{\mathsf{T}}\mathbf{B}^{\mathsf{T}}\boldsymbol{\Theta}\mathbf{A}. \label{eq:dare}
    \end{align}
    The corresponding control input for achieving the optimal infinite-horizon average control cost is given by 
    \begin{align}
        \mathbf{u}_{n, \star} = -\mathbf{K} \hat{\mathbf{x}}_n.
    \end{align}
\end{theorem}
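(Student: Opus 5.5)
We derive Theorem~\ref{theorem:infinite_horizon} from Theorem~\ref{theorem:finite_horizon} by taking the limit in \eqref{eq:infinite_lqr_cost}; no new dynamic-programming argument is needed.

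\emph{Step 1 (stationarity).} Under the stated stationary regime we have $\boldsymbol{\Pi}_n\to\boldsymbol{\Pi}$ and $\boldsymbol{\Omega}_n\to\boldsymbol{\Omega}$. Hence $\bar{\mathbf{M}}_n\equiv\bar{\mathbf{M}}$ and $\boldsymbol{\Sigma}_{e,n}\equiv\boldsymbol{\Sigma}_e$, or at least they converge. The backward recursion \eqref{eq:riccati_matrix} then becomes a time-invariant Riccati map for the effective input matrix $\mathbf{B}\bar{\mathbf{M}}$.

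To see this, set $\tilde{\mathbf{B}}\triangleq\mathbf{B}\bar{\mathbf{M}}$ and $\tilde{\mathbf{D}}\triangleq\bar{\mathbf{M}}^{\mathsf{T}}\mathbf{D}\bar{\mathbf{M}}$. Then $\boldsymbol{\Phi}=\tilde{\mathbf{D}}+\tilde{\mathbf{B}}^{\mathsf{T}}\boldsymbol{\Theta}\tilde{\mathbf{B}}$, and \eqref{eq:riccati_matrix} is exactly the standard LQR Riccati recursion for the pair $(\mathbf{A},\tilde{\mathbf{B}})$ with weights $(\mathbf{Q},\tilde{\mathbf{D}})$. The classical convergence result then applies: if $(\mathbf{A},\tilde{\mathbf{B}})$ is stabilizable and $(\mathbf{A},\mathbf{Q}^{1/2})$ is detectable, the backward iterates satisfy $\boldsymbol{\Theta}_n\to\boldsymbol{\Theta}$ as $N-n\to\infty$ for any terminal $\mathbf{S}\succeq0$. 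Here $\boldsymbol{\Theta}$ is the unique stabilizing solution of \eqref{eq:dare}. Consequently $\boldsymbol{\Phi}_n\to\boldsymbol{\Phi}$ and $\mathbf{K}_n\to\mathbf{K}$. Invertibility of $\boldsymbol{\Phi}$ follows from the assumed rank conditions on $\bar{\mathbf{M}}$. Similarly, the KF covariance satisfies $\mathbf{C}_n\to\mathbf{C}$ by \cite{chan1984convergence}.

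\emph{Step 2 (Cesàro limit).} We divide \eqref{eq:min_control_cost} by $N$ and examine each term.
\begin{itemize}
\item The initial term $\mathbb{E}[\mathbf{x}_1^{\mathsf{T}}\boldsymbol{\Theta}_1\mathbf{x}_1]$ is bounded uniformly in $N$, because $\boldsymbol{\Theta}_1$ converges and is therefore bounded, so this term divided by $N$ vanishes.
\item The three sums have summands that converge, respectively, to $\mathrm{tr}\{\boldsymbol{\Theta}\boldsymbol{\Sigma}_v\}$, $\mathrm{tr}\{(\mathbf{D}+\mathbf{B}^{\mathsf{T}}\boldsymbol{\Theta}\mathbf{B})\boldsymbol{\Sigma}_e\}$, and $\mathrm{tr}\{\mathbf{K}^{\mathsf{T}}\boldsymbol{\Phi}\mathbf{K}\mathbf{C}\}$.
\end{itemize}
This step needs care with indices. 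For fixed $n$, the convergence of $\boldsymbol{\Theta}_{n}$ is in the backward direction, i.e., as $N-n\to\infty$, whereas the forward quantities $\mathbf{C}_n$ and $\boldsymbol{\Sigma}_{e,n}$ converge as $n\to\infty$. Both limits hold for all $n$ except those within a fixed distance of $1$ or of $N$. The Cesàro-mean lemma, applied with uniform boundedness of all summands, then shows that $(1/N)\sum_{n=1}^{N-1}$ of each term tends to its limiting value. This gives \eqref{eq:control_cost_infinite_horizon}.

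\emph{Step 3 (optimality of the stationary policy).} The finite-horizon optimal gains $\mathbf{K}_n$ converge to $\mathbf{K}$. The stationary policy $\mathbf{u}_{n,\star}=-\mathbf{K}\hat{\mathbf{x}}_n$ therefore achieves the same average cost, and we verify this directly. Substituting the policy gives a closed loop $\mathbf{A}-\tilde{\mathbf{B}}\mathbf{K}$, which is stable because $\boldsymbol{\Theta}$ is the stabilizing DARE solution. We then apply the completion-of-squares identity used in Appendix~\ref{appendix:proof_finite_horizon} with $\boldsymbol{\Theta}_{n}\equiv\boldsymbol{\Theta}$. The per-stage cost equals $\mathbb{E}[\mathbf{x}_n^{\mathsf{T}}\boldsymbol{\Theta}\mathbf{x}_n]-\mathbb{E}[\mathbf{x}_{n+1}^{\mathsf{T}}\boldsymbol{\Theta}\mathbf{x}_{n+1}]$ plus the three trace terms. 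This sum telescopes, and because the state stays bounded in second moment the boundary terms vanish after dividing by $N$, matching the bound in Step 2. For the lower bound, any policy's $N$-horizon cost is at least $J_{N,\star}$, so its average cost is at least $J_\infty$. Hence the stationary policy is optimal.

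\emph{Main obstacle.} The hard part is Step 1: justifying that the Riccati iterates converge and that $\boldsymbol{\Phi}$ stays invertible for the effective pair $(\mathbf{A},\mathbf{B}\bar{\mathbf{M}})$. This requires stabilizability of $(\mathbf{A},\mathbf{B}\bar{\mathbf{M}})$, which depends on the beamformer through $\bar{\mathbf{M}}$. I would assume full rank of $\bar{\mathbf{M}}$, which the LMMSE combiner under the paper's rank assumptions gives, so that stabilizability of $(\mathbf{A},\mathbf{B})$ is inherited, and then invoke the standard DARE convergence theorem rather than reprove it.
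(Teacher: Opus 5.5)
Your proposal is correct and follows the paper's route: divide the cost of Theorem~\ref{theorem:finite_horizon} by $N$, use convergence of the backward Riccati iterates to the stabilizing DARE solution and of $\mathbf{C}_n$ to $\mathbf{C}$, and let the initial-state term vanish. Your Ces\`aro handling of the backward-versus-forward indices and your Step~3 argument for $\mathbf{u}_{n,\star}=-\mathbf{K}\hat{\mathbf{x}}_n$ (telescoping for achievability, the finite-horizon optimum for the lower bound) add rigor the paper skips, since it only takes the steady-state limit of the finite-horizon policy; note, though, that invertibility of $\boldsymbol{\Phi}$ and the classical Riccati convergence theorem also need a positivity condition such as $\mathbf{D}\succ 0$, not only full rank of $\bar{\mathbf{M}}$.
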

\begin{IEEEproof}
    See Appendix \ref{appendix:proof_infinite_horizon}.
\end{IEEEproof}
Note that the infinite-horizon control cost in Theorem \ref{theorem:infinite_horizon} is defined under the stabilizing stationary regime.
Specifically, define the closed-loop state-transition matrix as $\mathbf{A}_{\mathrm{cl}}\triangleq \mathbf{A}-\mathbf{B}\bar{\mathbf{M}}\mathbf{K}$.
Under the standard conditions that $(\mathbf{A},\mathbf{B}\bar{\mathbf{M}})$ is stabilizable and $(\mathbf{A},\mathbf{Q}^{1/2})$ is detectable,  the \ac{dare} in \eqref{eq:dare} admits a stabilizing solution satisfying $\rho\left(\mathbf{A}_{\mathrm{cl}}\right)<1$, where $\rho(\cdot)$ denotes the spectral radius. \footnote{In particular, stabilizability means that there exists a feedback matrix $\mathbf{K}$ such that the closed-loop matrix $\mathbf{A}-\mathbf{B}\bar{\mathbf{M}}\mathbf{K}$ is stable, i.e., $\rho(\mathbf{A}-\mathbf{B}\bar{\mathbf{M}}\mathbf{K})<1$. 
Detectability means that any eigenvector $\mathbf{v}$ of $\mathbf{A}$ corresponding to an eigenvalue satisfying $|\lambda|\geq 1$ also satisfies $\mathbf{Q}^{1/2}\mathbf{v}\neq\mathbf{0}$.}
Accordingly, we have $\mathbf{A}_{\mathrm{cl}}^{n}\rightarrow\mathbf{0}_{L \times L}$ as $n\rightarrow\infty$. 
This result indicates that the plant-state covariance converges to a finite stationary value.
Together with the stationary disturbance and estimation-error covariances considered above, this ensures that the infinite-horizon average control cost is finite.
Regarding this theorem, we provide the following remarks:
\begin{remark} \label{remark:beamforming_on_control_cost}
    (How Beamforming Affects Control Cost) 
   \emph{The communication beamformer, i.e., $\mathbf{W}_{\rm c}$, governs the communication interference through the covariance matrix $\mathbf{R}_d$, while the control-input precoding matrix $\mathbf{W}_{\rm p}$ affects the control cost by manipulating the equivalent channel of the plant, i.e., $\mathbf{F}\mathbf{W}_{\rm p}$.
    In short, we can write $\bar{\mathbf{M}} = \bar{\mathbf{M}}(\mathbf{W}_{\rm c}, \mathbf{W}_{\rm p})$ and $\boldsymbol{\Sigma}_e = \boldsymbol{\Sigma}_e(\mathbf{W}_{\rm c}, \mathbf{W}_{\rm p})$ to highlight their dependence on the beamformers. }
\end{remark}
\begin{remark}
    (Solution to Riccati Fixed-Point Equation) 
    \emph{Based on Theorem \ref{theorem:infinite_horizon}, we show how to evaluate $J_{\infty}(\mathbf{W}_{\rm c}, \mathbf{W}_{\rm p})$ for fixed beamformers $\{\mathbf{W}_{\rm c}, \mathbf{W}_{\rm p}\}$ in what follows.
    Substituting the expression of $\boldsymbol{\Phi}$ into the steady-state Riccati equation yields the standard \ac{dare}.
    For fixed $\bar{\mathbf{M}}$ and $\boldsymbol{\Sigma}_e$, the \ac{dare} can be solved via the following steps: i) Initialize the Riccati matrix $\boldsymbol{\Theta}^{(0)}=\mathbf{Q}$ or any positive-definite matrix; ii) iteratively compute $\mathbf{\Phi }^{\left( i \right)}=\bar{\mathbf{M}}^{\mathsf{T}}\left( \mathbf{D}+\mathbf{B}^{\mathsf{T}}\boldsymbol{\Theta}^{\left( i \right)}\mathbf{B} \right) \bar{\mathbf{M}}^{}$ and $\boldsymbol{\Theta}^{\left( i+1 \right)}=\mathbf{Q}+\mathbf{A}^{\mathsf{T}}\boldsymbol{\Theta}^{\left( i \right)}\mathbf{A}-\mathbf{A}^{\mathsf{T}}\boldsymbol{\Theta}^{\left( i \right)}\mathbf{B}\bar{\mathbf{M}}\left( \mathbf{\Phi }^{\left( i \right)} \right) ^{-1}\bar{\mathbf{M}}^{\mathsf{T}}\mathbf{B}^{\mathsf{T}}\boldsymbol{\Theta}^{\left( i \right)}\mathbf{A}$ until the convergence condition is achieved, e.g., $\|\boldsymbol{\Theta}^{(i+1)} - \boldsymbol{\Theta}^{(i)}\|_F \le \eta$ where $\eta>0$ is a prescribed tolerance, and let $I$ denote the stopping iteration index; and iii) compute $J_{\infty}\left( \mathbf{W}_{\mathrm{c}},\mathbf{W}_{\mathrm{p}} \right)$ using $\boldsymbol{\Theta} = \boldsymbol{\Theta}^{(I)}$ and $\boldsymbol{\Phi} = \boldsymbol{\Phi}^{(I)}$.}
\end{remark}

\section{Beamforming Design for JCC Systems With Vector-Valued Control Inputs} \label{sect:beamforming_design_jcc_vector_valued}
As discussed in Remark~\ref{remark:beamforming_on_control_cost}, the communication beamforming matrix $\mathbf{W}_{\rm c}$ and the control-input precoding matrix $\mathbf{W}_{\rm p}$ jointly determine the communication performance and the infinite-horizon average control cost.
Since the infinite-horizon control cost is defined under the stabilizing stationary regime, we focus on optimizing the long-term control performance. 
Accordingly, the beamforming design problem is formulated as follows:
\begin{problem}\label{pb:vector_beamforming}
\begin{alignat}{2}
\underset{\mathbf{W}_{\rm c},\mathbf{W}_{\rm p}}{\rm{min}} &\quad J_{\infty}\left(\mathbf{W}_{\rm c},\mathbf{W}_{\rm p}\right) \label{obj:vector_bf}\\
{\rm s.t.}&\quad \gamma_k\left(\mathbf{W}_{\rm c},\mathbf{W}_{\rm p}\right) \geq \Gamma_{k}, \quad \forall k\in\mathcal{K}, \label{constraint:communication_qos}
\\&\quad \sum\nolimits_{k=1}^{K} \left\|\mathbf{w}_k\right\|_2^2 + \left\|\mathbf{W}_{\rm p}\right\|_F^2 \leq P_{\max}, \label{constraint:transmit_power}
\end{alignat}
\end{problem}
where $J_{\infty}\left(\mathbf{W}_{\rm c},\mathbf{W}_{\rm p}\right)$ denotes the infinite-horizon average control cost, constraint \eqref{constraint:communication_qos} ensures that the \ac{sinr} of the $k$-th \ac{cu} meets the prescribed target \ac{qos} requirement $\Gamma_k$, and constraint \eqref{constraint:transmit_power} limits the total transmit power to be no larger than $P_{\max}$.
We note that this formulation does not imply a higher priority for communication. 
The communication \ac{sinr} constraints specify the minimum \ac{qos} requirements, while the control performance is directly optimized by minimizing the long-term \ac{lqg} cost. Although a control-cost constraint can also be considered, its threshold is generally application-dependent and related to the specific plant dynamics and control task. Therefore, we adopt the current formulation in this work.
Problem \eqref{pb:vector_beamforming} is nonconvex and therefore challenging to solve directly. 
In particular, the objective \eqref{obj:vector_bf} is a nonlinear function of the beamforming matrices, owing to the \ac{lmmse} receiver and the associated steady-state control and estimation quantities.
In addition, constraint \eqref{constraint:communication_qos} is also nonconvex in its fractional form.
To this end, we first reformulate constraint \eqref{constraint:communication_qos} as an equivalent convex constraint and then address the nonconvexity in the objective function using an \ac{sca} approach.

Based on \cite{xu2026generalized}, constraint \eqref{constraint:communication_qos} can be equivalently reformulated as a second-order cone constraint.
The reformulation uses the invariance of the objective function and the constraints under a unit-modulus phase rotation of the beamforming vector $\mathbf{w}_k$.
In particular, replacing $\mathbf{w}_k$ with $\mathrm{e}^{\mathrm{j}\phi_k} \mathbf{w}_k$, where $\phi_k \in [0, 2 \pi)$, does not change the value of all relevant squared magnitudes, covariance matrices, and the transmit-power terms.
Without loss of optimality, we choose $\phi_k = -\angle (\mathbf{h}_k^{\mathsf{H}} \mathbf{w}_k)$.
As such, the following equations hold:
\begin{align}
    \Re\{\mathbf{h}_k^{\textsf{H}} \mathbf{w}_k \} \ge 0, \quad \Im\{\mathbf{h}_k^{\textsf{H}} \mathbf{w}_k \} =0. \label{eq:socp_re}
\end{align}
The \ac{qos} constraint can be equivalently written as follows:
\begin{align}
    \left| \mathbf{h}_{k}^{\mathsf{H}}\mathbf{w}_k \right|^2\ge \Gamma _k\left( \sum\nolimits_{j\ne k}^{}{\left| \mathbf{h}_{k}^{\mathsf{H}}\mathbf{w}_j \right|^2+\left\| \mathbf{h}_{k}^{\mathsf{H}}\mathbf{W}_{\mathrm{p}} \right\| _{2}^{2}}+\sigma _{\mathrm{c}}^{2} \right). \label{constraint:communication_qos_equi}
\end{align}
According to \eqref{eq:socp_re}, the phase condition $\left| \mathbf{h}_{k}^{\mathsf{H}}\mathbf{w}_k \right| = \Re\{\mathbf{h}_k^{\textsf{H}} \mathbf{w}_k \}$ holds. 
Thus, \eqref{constraint:communication_qos_equi} can be further recast as the following second-order cone constraint:
\begin{align}
    \left\| \left[ \begin{array}{c}
	\mathbf{h}_{k}^{\mathsf{H}}\mathbf{w}_1\\
	\vdots\\
	\mathbf{h}_{k}^{\mathsf{H}}\mathbf{w}_{k-1}\\
	\mathbf{h}_{k}^{\mathsf{H}}\mathbf{w}_{k+1}\\
	\vdots\\
	\mathbf{h}_{k}^{\mathsf{H}}\mathbf{w}_K\\
	\mathbf{W}_{\mathrm{p}}^{\mathsf{H}}\mathbf{h}_{k}^{}\\
	\sigma _{\mathrm{c}}\\
\end{array} \right] \right\| _2\le \frac{\Re \left\{ \mathbf{h}_{k}^{\mathsf{H}}\mathbf{w}_k \right\}}{\sqrt{\Gamma _k}}. \label{constraint:socp_qos}
\end{align}
The resulting inequality is a convex second-order cone constraint, since its left-hand side is the Euclidean norm of an affine mapping and its right-hand side is affine \cite{boyd2004convex}.
Analogously, the power constraint in \eqref{constraint:transmit_power} can also be reformulated as follows:
\begin{align}
    \left\| \left[ \begin{array}{c}
	\mathrm{vec}\left\{ \mathbf{W}_{\mathrm{c}} \right\}\\
	\mathrm{vec}\left\{ \mathbf{W}_{\mathrm{p}} \right\}\\
\end{array} \right] \right\| _2\le \sqrt{P_{\max}}. \label{constraint:socp_power}
\end{align}
At this point, all constraints have been converted into convex forms.

Subsequently, we address the nonconvex objective function. 
According to Theorem \ref{theorem:infinite_horizon}, the infinite-horizon average cost function depends nonlinearly on the beamforming matrices and therefore cannot be directly reformulated as an \ac{socp} objective. 
Therefore, we employ the \ac{sca} approach to construct a strongly convex local surrogate of the objective function at each iteration. 
In particular, define the real-valued optimization variable as follows:
\begin{align}
    \mathbf{p} =[\Re\{\mathrm{vec\{\mathbf{W}_{\rm c}\}}\}; \Im\{\mathrm{vec\{\mathbf{W}_{\rm c}\}}\}; \Re\{\mathrm{vec\{\mathbf{W}_{\rm p}\}}\}; \Im\{\mathrm{vec\{\mathbf{W}_{\rm p}\}}\}]. \notag
\end{align}
Let $f(\mathbf{p})$ be the shorthand for the objective function $J_{\infty}(\mathbf{W}_{\rm c}, \mathbf{W}_{\rm p})$.
At the $i$-th iteration, the current optimization variable, objective value, and gradient are denoted by $\mathbf{p}^{(i)}$, $f(\mathbf{p}^{(i)})$, and $\mathbf{g}^{(i)} = \nabla f(\mathbf{p}^{(i)})$, respectively.
Since $f(\cdot)$ involves the steady-state Riccati equation and \ac{kf} error-covariance recursion, its gradient does not admit a closed-form solution.
Therefore, this gradient is numerically computed using a central finite-difference approximation.

At iteration $i$, we construct a first-order approximation of $f(\cdot)$ around $\mathbf{p}^{(i)}$ with an augmented quadratic proximal term, which is given by
\begin{align}
    \check{f}_i\left( \mathbf{p} \right) =f(\mathbf{p}^{\left( i \right)})+( \mathbf{g}^{\left( i \right)}) _{}^{\mathsf{T}}( \mathbf{p}-\mathbf{p}^{\left( i \right)} ) +\frac{\tau _i}{2}\| \mathbf{p}-\mathbf{p}^{\left( i \right)} \| _{2}^{2}, \label{eq:surrogate_func}
\end{align}
where the quadratic proximal term penalizes large deviations from the current position $\mathbf{p}^{\left( i \right)}$ and weight factor $\tau_i > 0$ is used to maintain convexity. 
Specifically, the parameter $\tau_i$ controls the conservativeness of the update: A larger $\tau_i$ leads to a smaller step, and vice versa.
Hence, at iteration $i$, the original problem \eqref{pb:vector_beamforming} can be reformulated as follows:
\begin{problem}\label{pb:vector_beamforming_reformulate}
\begin{alignat}{2}
    \underset{\mathbf{p}}{\rm{min}} &\quad \check{f}_i\left( \mathbf{p} \right) \label{obj:vector_bf_i_th_iter}\\
    {\rm s.t.}&\quad \eqref{constraint:socp_qos}~\mathrm{and ~}\eqref{constraint:socp_power}. \notag 
\end{alignat}
\end{problem}
The candidate solution is obtained by minimizing the strongly convex surrogate function over the convex feasible set.
To streamline the following backtracking method, we denote the candidate solution obtained by solving problem \ref{pb:vector_beamforming_reformulate} as $\mathbf{p}^+$.

However, the convexity of the surrogate function alone does not guarantee a decrease in the original objective function.
In particular, a candidate solution $\mathbf{p}^+$ may cause $f(\mathbf{p}^+) > \check{f}_i\left( \mathbf{p}^+ \right)$, thereby pushing the minimization in the opposite direction.
Therefore, we use a backtracking procedure to select the weight factor $\tau_i$.
More specifically, let $\beta>1$ denote the multiplicative backtracking factor.
Starting from an initial value of $\tau_i$, we first solve the convex surrogate problem and examine whether the candidate solution decreases the objective value, i.e., $f(\mathbf{p}^+) \le \check{f}_i\left( \mathbf{p}^+ \right)$.
If not, we update $\tau_i$ by $\tau_i \leftarrow \beta \tau_i$.
The updated $\tau_i$ is then used to re-solve the surrogate problem, yielding a new candidate solution.
This procedure is repeated until $f(\mathbf{p}^+) \le \check{f}_i\left( \mathbf{p}^+ \right)$ is satisfied, thereby ensuring a decrease in the objective value.
The iterative procedure terminates when the relative change in the objective value satisfies the following condition:
\begin{align}
    \frac{\left|f\left(\mathbf{p}^{(i+1)}\right)-f\left(\mathbf{p}^{(i)}\right)\right|}{\max\left\{1,\left|f\left(\mathbf{p}^{(i)}\right)\right|\right\}}\leq \epsilon, \label{eq:convergence_condition}
\end{align}
where $\epsilon>0$ is a prescribed convergence tolerance.
The complete algorithm is presented in Algorithm \ref{alg:socp_sca}. 
\begin{algorithm}[t!]
    \small
    \caption{SOCP-Based SCA Algorithm}
    \label{alg:socp_sca}
    \begin{algorithmic}[1]
        \STATE{Initialize $i=0$, a feasible point $\mathbf{p}^{(0)}$, $\tau_0>0$, $\beta>1$, and $\epsilon>0$\;}
        \REPEAT
            \STATE{Evaluate $f(\mathbf{p}^{(i)})$ and compute the gradient $\mathbf{g}^{(i)}=\nabla f(\mathbf{p}^{(i)})$\;}
            \REPEAT
                \STATE{Construct the surrogate function $\check{f}_i(\mathbf{p})$ according to \eqref{eq:surrogate_func}\;}
                \STATE{Obtain the candidate point $\mathbf{p}^{+}$ by solving the convex problem in \eqref{pb:vector_beamforming_reformulate}\;}
                \IF{$f(\mathbf{p}^{+})> \check{f}_i(\mathbf{p}^{+})$}
                    \STATE{Update $\tau_i\leftarrow\beta\tau_i$\;}
                \ENDIF
            \UNTIL{$f(\mathbf{p}^{+})\leq\check{f}_i(\mathbf{p}^{+})$}
            \STATE{Set $\mathbf{p}^{(i+1)}=\mathbf{p}^{+}$, $\tau_{i+1}=\tau_i$, and $i=i+1$ \;}
        \UNTIL{the convergence condition in \eqref{eq:convergence_condition} is satisfied}
        \RETURN{$\mathbf{W}_{\rm c}^{\star}$ and $\mathbf{W}_{\rm p}^{\star}$ recovered from $\mathbf{p}^{(i)}$}
    \end{algorithmic}
\end{algorithm}
Let $I_{\rm SCA}$ and $I_{\rm BT}$ denote the numbers of \ac{sca} iterations and backtracking steps per iteration, respectively. 
Given that the real-valued optimization variable in $\mathbf{p}$ contains $2M_{\rm t}(K+L)$ entries, the complexity of solving each \ac{socp} problem via an interior-point method is given by $\mathcal{O}\left((M_{\rm t}(K+L))^{3.5}\log\left({1}/{\delta}\right)\right)$, where $\delta>0$ denotes the solver accuracy.
Thus, the overall computational complexity of Algorithm \ref{alg:socp_sca} is given by $\mathcal{O}\left(I_{\rm SCA}I_{\rm BT}(M_{\rm t}(K+L))^{3.5}\log\left({1}/{\delta}\right)\right)$.

\section{Beamforming Design for JCC Systems With Scalar-Valued Control Inputs} \label{sect:beamforming_design_jcc_scalar_valued}
To gain more insight into the trade-off in a \ac{jcc} system, we consider a simplified scenario in which the control inputs are scalars rather than vectors.
In practice, these scalar control inputs correspond to one-dimensional actuation commands.
In this section, we first specialize the vector control-input model in Section \ref{subsect:vector_control_input} to the scalar case. 
Based on this specialization, a single-\ac{cu}, single-plant scenario is considered to characterize the Pareto boundary in the \ac{jcc} system.
Note that we assume that the plant is equipped with a single antenna for scalar control-input reception in this section. 

\subsection{Specialization to Scalar Control Inputs}
Let $\mathbf{h}_{\rm c} \in \mathbb{C}^{M_{\rm t} \times 1}$ and $\mathbf{h}_{\rm p} \in \mathbb{C}^{M_{\rm t} \times 1}$ denote the communication and control channel vectors. 
The \ac{bs} adopts beamformers $\mathbf{w}_{\rm c} \in \mathbb{C}^{M_{\rm t} \times 1}$ and $\mathbf{w}_{\rm p} \in \mathbb{C}^{M_{\rm t} \times 1}$ to transmit communication symbols $s_n \in \mathbb{C}$ and the normalized control input $\tilde{u}_n \in \mathbb{R}$, respectively.
In particular, the normalized control input is obtained from $\tilde{u}_n=\pi_n ^{-1/2}u_n$, where $u_n$ is the original scalar control input and $\pi_n \triangleq \mathbb{E} [ u_{n}^{2}]$.
Under the assumptions that $\mathbb{E} [|s_n|^2 ] =1$ and $\mathbb{E} [\tilde{u}_n^2 ] =1$, the transmit signal is given by
\begin{align}
    \mathbf{s}_n=\mathbf{w}_{\mathrm{c}}s_n+\mathbf{w}_{\mathrm{p}}\tilde{u}_n.
\end{align}
Consequently, the received signals at the plant and the \ac{cu} are respectively given by
\begin{align}
    r_{{\rm p},n}&=\mathbf{h}_{\rm p}^{\textsf H}\mathbf{w}_{\rm p}\tilde u_n+ \mathbf{h}_{\rm p}^{\textsf H}\mathbf{w}_{\rm c}s_n+z_{{\rm p},n},
    \label{eq:scalar_plant_received_signal}\\
    y_{{\rm c},n}&=\mathbf{h}_{\rm c}^{\textsf H}\mathbf{w}_{\rm c}s_n+\mathbf{h}_{\rm c}^{\textsf H}\mathbf{w}_{\rm p}\tilde u_n+z_{{\rm c},n},\label{eq:scalar_cu_received_signal}
\end{align}
where $z_{{\rm p},n}\sim\mathcal{CN}(0,\sigma_{\rm p}^2)$ and $z_{{\rm c},n}\sim\mathcal{CN}(0,\sigma_{\rm c}^2)$ denote the additive Gaussian noise at the plant and the \ac{cu}, respectively.
Following the same steps used in the derivations for the vector control-input scenario, the \ac{sinr} for the \ac{cu} is given by
\begin{align}
    {\gamma}_{\rm c}(\mathbf{w}_{\rm c},\mathbf{w}_{\rm p}) &=
    \frac{|\mathbf{h}_{\rm c}^{\textsf H}\mathbf{w}_{\rm c}|^2}
    {|\mathbf{h}_{\rm c}^{\textsf H}\mathbf{w}_{\rm p}|^2+\sigma_{\rm c}^2}.
\end{align}
At the plant, we adopt the \ac{lmmse} estimator to extract $\tilde{u}_n$ from the received signal.
First, let $f_{\mathrm p}\triangleq \mathbf h_{\mathrm p}^{\mathsf H}\mathbf w_{\mathrm p}$ denote the effective control-channel coefficient from the \ac{bs} to the plant, and define the aggregate interference-plus-noise variance at the plant as
$\sigma_d^2\triangleq |\mathbf h_{\mathrm p}^{\mathsf H}\mathbf w_{\mathrm c} |^2 +\sigma_{\mathrm p}^2$.
Therefore, as in the vector case in \eqref{eq:plant_observation_vector}, the received signal at the plant is given by $r_{\mathrm{p},n}=f_{\mathrm{p}}\tilde{u}_n+d_n$, where $\mathbb{E}[|d_n|^2]=\sigma_d^2$.
Additionally, we stack the real and imaginary components of the effective channel and the received disturbance as follows:
\begin{align}
    &\breve{\mathbf{f}}_{\mathrm{p}}\triangleq \left[ \Re \left\{ f_{\mathrm{p}} \right\} ,\Im \left\{ f_{\mathrm{p}} \right\} \right] ^{\textsf{T}}, \quad \breve{\mathbf{d}}_n\triangleq \left[ \Re \left\{ d_n \right\} ,\Im \left\{ d_n \right\} \right] ^{\textsf{T}}, \notag  \\
    &\breve{\mathbf{r}}_n\triangleq\left[\Re\{r_{\mathrm{p},n}\},\Im\{r_{\mathrm{p},n}\}\right]^{\mathsf{T}}. \notag 
\end{align}
Thus, $r_{\mathrm{p},n}$ can be written as follows:
\begin{align}
    \breve{\mathbf{r}}_n=\breve{\mathbf{f}}_{\mathrm{p}}\tilde{u}_n+\breve{\mathbf{d}}_n. \label{eq:real_receive_control_input}
\end{align}
In this scalar case, the \ac{lmmse} combiner is given by 
\begin{align}
    \mathbf{g}^{\mathsf{T}}=\breve{\mathbf{f}}_{\mathrm{p}}^{\mathsf{T}}( \breve{\mathbf{f}}_{\mathrm{p}}^{}\breve{\mathbf{f}}_{\mathrm{p}}^{\mathsf{T}}+\breve{\mathbf{R}}_d) ^{-1}=\frac{\breve{\mathbf{f}}_{\mathrm{p}}^{\mathsf{T}}}{\left| f_{\mathrm{p}} \right|^2+\sigma _{d}^{2}/2}, \notag 
\end{align}
where $\breve{\mathbf{R}}_d\triangleq \mathbb{E} [ \breve{\mathbf{d}}_{n}^{}\breve{\mathbf{d}}_{n}^{\mathsf{T}} ] =\frac{\sigma _{d}^{2}}{2}\mathbf{I}_2$ denotes the covariance matrix of $\breve{\mathbf{d}}_n$.
Therefore, the estimated normalized control input is given by $\hat{\tilde{u}}_n=\mathbf{g}^{\mathsf{T}}\breve{\mathbf{r}}_n$.
Then, after denormalization using $\pi^{1/2}_n$, the estimated scalar control input is given by
\begin{align}
    \hat{u}_n=\pi ^{1/2}_n\hat{\tilde{u}}_n=\bar{m}_n{u}_n+e_n,
\end{align}
where the recovery coefficient and the actuation disturbance are respectively defined as follows:
\begin{align}
    \bar{m}_n \triangleq \mathbf{g}^{\mathsf{T}}\breve{\mathbf{f}}_{\mathrm{p}}^{}=\frac{\left| f_{\mathrm{p}} \right|^2}{\left| f_{\mathrm{p}} \right|^2+\sigma _{d}^{2}/2}, \quad e_n\triangleq \pi_n ^{1/2}\mathbf{g}^{\mathsf{T}}\breve{\mathbf{d}}_n. \label{eq:recovery_coeff_and_disturbance_variance}
\end{align}
Additionally, the variance of $e_n$ is given by 
\begin{align}
    \sigma _{e,n}^{2}=\mathbb{E} [e_{n}^{2}]=\pi _{n}^{}\mathbb{E} [ |\mathbf{g}^{\mathsf{T}}\breve{\mathbf{d}}_n|^2 ] =\frac{\left( \pi _n\sigma _{d}^{2}/2 \right) \left| f_{\mathrm{p}} \right|^2}{(\left| f_{\mathrm{p}} \right|^2+\sigma _{d}^{2}/2)^2}. \notag
\end{align}
Define $A \in \mathbb{R}$ and $B \in \mathbb{R}$ as the scalar counterparts of $\mathbf{A}$ and $\mathbf{B}$ in \eqref{eq:control_equation}, respectively.
The scalar state-evolution equation from time slot $n$ to time slot $n+1$ is characterized by
\begin{align}
    x_{n+1} = A x_{n} + B \bar{m}_n u_n +B e_n + v_n, \label{eq:state_evolution_scalar}
\end{align}
where $v_n \sim \mathcal{N}(0, \sigma_{\rm v}^2)$ denotes the additive control process noise.
Similar to the vector case, the \ac{bs} adopts the \ac{kf} to estimate the scalar plant state.
Hence, we also specialize the state-feedback mechanism to the scalar case. 
Note that, although the plant state is scalar-valued, the \ac{bs} utilizes a multi-antenna array to receive.
Let $\mathbf{h}_{\rm f}\in\mathbb{C}^{M_{\rm r}\times 1}$ denote the uplink Rayleigh fading channel from the plant to the \ac{bs}.
Letting $\omega_{n+1} \triangleq \mathbb{E}[x_{n+1}^2]$, the received signal at the \ac{bs} is given by 
\begin{align}
    \mathbf{c}_{n+1}={\sqrt{P_{\rm p}}} \omega_{n+1}^{-1/2} \mathbf{h}_{\rm f}x_{n+1}+\mathbf{n}_{n+1},
\end{align}
where $\mathbf{n}_{n+1}\sim\mathcal{CN}
(\boldsymbol{0}_{M_{\rm r}},\sigma_{\rm f}^2\mathbf{I}_{M_{\rm r}})$ denotes the additive Gaussian noise.
Here, beamforming at the plant is not implemented due to its single-antenna architecture.

For a general time slot, let $\hat{x}_{n}$ denote the posterior \ac{kf} estimate of $x_n$, and define the corresponding estimation error as $\epsilon_n \triangleq x_n - \hat{x}_n$.
Accordingly, the relationship between the true scalar plant state and its estimated counterpart is given by $x_n = \hat{x}_n + \epsilon_n$.
Let $\sigma_{\epsilon, n}^2 \triangleq \mathbb{E}[\epsilon_n^2]$ denote the corresponding \ac{kf} estimation-error variance.
This additive relationship is leveraged below to explicitly account for the additional control cost incurred due to imperfect plant-state estimation.

For the control-performance metric, the finite-horizon cost function of scalar \ac{lqg} control is defined as follows \cite{van1981certainty}:
\begin{align}
    J_N(\mathbf{w}_{\mathrm{c}},\mathbf{w}_{\mathrm{p}})=\mathbb{E} \left[ \sum\nolimits_{n=1}^{N-1}{\left( Qx_{n}^{2}+D\hat{u}_{n}^{2} \right)}+Sx_{N}^{2} \right], \label{eq:scalar_control_cost_finite}
\end{align}
where $Q\ge 0$ penalizes the deviation of the scalar plant state from the desired origin, $D > 0$ penalizes the actual scalar control input applied at the plant, and $S\ge 0$ penalizes the terminal state.
Again, to eliminate horizon-dependent effects, we define the infinite-horizon average cost as $J_{\infty}(\mathbf{w}_{\mathrm{c}},\mathbf{w}_{\mathrm{p}})=\lim_{N\rightarrow \infty} (1/N)J_N(\mathbf{w}_{\mathrm{c}},\mathbf{w}_{\mathrm{p}})$.
As in the vector case, we consider the stationary operating regime for the infinite-horizon analysis. 
Consequently, we have $\pi_n \rightarrow \pi$, $\bar{m}_n \rightarrow \bar{m}$, $\sigma_{e,n}^2 \rightarrow \sigma_e^2$, and $\sigma_{\epsilon, n}^2 \rightarrow \sigma_\epsilon^2$. 
Based on the above, the following corollary gives the scalar version of the infinite-horizon optimal control cost:
\begin{corollary} \label{corollary:control_input_scalar}
    (Infinite-horizon Average Control Cost) For the scalar control input, the optimal infinite-horizon average control cost is given by
    \begin{align}
        J_{\infty} (\mathbf{w}_{\rm c}, \mathbf{w}_{\rm p}) =  \theta \sigma_{v}^2 + (D+B^2\theta) \sigma_{e}^2 +\frac{A^2B^2\theta ^2}{D+B^2\theta}\sigma _{\epsilon}^{2}, \label{eq:infinite_horizon_avg_control_cost_scalar}
    \end{align}
    where the steady-state Riccati coefficient $\theta$ is given by
    \begin{align}
        \theta&=\frac{QB^2-D\left( 1-A^2 \right) +\sqrt{\Delta}}{2B^2}, \\
        \Delta &\triangleq \left( D\left( 1-A^2 \right) -QB^2 \right)^2 +4QDB^2.
    \end{align}
    The corresponding optimal scalar control input is given by 
    \begin{align}
        u_{n, \star} = -k\hat{x}_n,
    \end{align}
    where the scalar control gain is defined as $k\triangleq \frac{AB\theta}{\bar{m}\left( D+B^2\theta \right)}$.
\end{corollary}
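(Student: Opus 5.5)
The plan is to obtain the corollary by specializing Theorem~\ref{theorem:infinite_horizon} to $L=1$, replacing every matrix by its scalar counterpart: $\mathbf{A}\to A$, $\mathbf{B}\to B$, $\bar{\mathbf{M}}\to\bar m$, $\boldsymbol{\Sigma}_v\to\sigma_v^2$, $\boldsymbol{\Sigma}_e\to\sigma_e^2$, $\mathbf{C}\to\sigma_\epsilon^2$, $\mathbf{Q}\to Q$ and $\mathbf{D}\to D$. Throughout I will assume $B\neq 0$ and $f_{\mathrm p}\neq 0$. By \eqref{eq:recovery_coeff_and_disturbance_variance}, the second assumption gives $\bar m>0$, so $\Phi$ is invertible. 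The scalar system is also taken to satisfy the stabilizability and detectability conditions stated after Theorem~\ref{theorem:infinite_horizon}, which Theorem~\ref{theorem:infinite_horizon} requires; in the scalar case these reduce to $B\bar m\neq 0$ together with the requirement that $Q>0$ whenever $|A|\ge 1$.

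\textbf{Step 1: auxiliary quantities.} The auxiliary scalar becomes $\Phi=\bar m^2(D+B^2\theta)$. The gain then follows directly:
\begin{align}
k=\Phi^{-1}\bar m B\theta A=\frac{AB\theta}{\bar m(D+B^2\theta)},
\end{align}
which is exactly the gain in the statement. The optimal input $u_{n,\star}=-k\hat x_n$ is then just the scalar form of $\mathbf{u}_{n,\star}=-\mathbf{K}\hat{\mathbf{x}}_n$.

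\textbf{Step 2: the scalar DARE.} Substituting into \eqref{eq:dare}, the factor $\bar m^2$ cancels and leaves
\begin{align}
\theta=Q+A^2\theta-\frac{A^2B^2\theta^2}{D+B^2\theta}.
\end{align}
This cancellation shows that $\theta$ does not depend on the beamformers. Multiplying through by $D+B^2\theta$ and collecting terms gives the quadratic
\begin{align}
B^2\theta^2+\big(D(1-A^2)-QB^2\big)\theta-QD=0.
\end{align}
Its discriminant is exactly $\Delta$, and the two roots are $\big(QB^2-D(1-A^2)\pm\sqrt{\Delta}\big)/(2B^2)$.

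\textbf{Step 3: the cost.} I will substitute into \eqref{eq:control_cost_infinite_horizon}. The first two terms are immediate: $\theta\sigma_v^2$ and $(D+B^2\theta)\sigma_e^2$. For the third term,
\begin{align}
k^2\Phi\sigma_\epsilon^2=\frac{A^2B^2\theta^2}{\bar m^2(D+B^2\theta)^2}\,\bar m^2(D+B^2\theta)\,\sigma_\epsilon^2=\frac{A^2B^2\theta^2}{D+B^2\theta}\,\sigma_\epsilon^2,
\end{align}
which yields \eqref{eq:infinite_horizon_avg_control_cost_scalar}.

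\textbf{Main obstacle: selecting the root.} I expect the main difficulty to be justifying the ``$+$'' root, i.e., showing it is the stabilizing, nonnegative solution required by Theorem~\ref{theorem:infinite_horizon}. I would argue in three parts.
\begin{enumerate}
\item \emph{Nonnegativity.} The product of the roots is $-QD/B^2\le 0$, so the roots have opposite signs (or one is zero). The ``$+$'' root is therefore the unique root with $\theta\ge 0$, and also $D+B^2\theta>0$.
\item \emph{Stability of the closed loop.} The closed-loop coefficient is $A_{\rm cl}=A-B\bar m k=AD/(D+B^2\theta)$. Rewriting the DARE as $\theta=Q+A^2\theta D/(D+B^2\theta)$ gives
\begin{align}
\theta\,(1-A_{\rm cl}A)=\theta\Big(1-\frac{A^2D}{D+B^2\theta}\Big)=Q.
\end{align}
\item \emph{Conclusion $|A_{\rm cl}|<1$.} If $Q>0$, then $\theta>0$ and $A^2D<D+B^2\theta$. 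Since $0<D/(D+B^2\theta)<1$, this gives $A_{\rm cl}^2<A^2D/(D+B^2\theta)<1$. If $Q=0$, the detectability assumption forces $|A|<1$, and then $|A_{\rm cl}|\le|A|<1$ directly.
\end{enumerate}
Hence the ``$+$'' root is the stabilizing solution, and Steps~1--3 complete the specialization.
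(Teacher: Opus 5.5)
Your proposal is correct and follows the paper's route: specialize Theorem~\ref{theorem:infinite_horizon} with $\phi=\bar m^2(D+B^2\theta)$, read off $k$ and the three trace terms, and solve the quadratic $B^2\theta^2+(D(1-A^2)-QB^2)\theta-QD=0$ for its nonnegative root. The paper only states that the nonnegative root is selected; your further argument that the ``$+$'' root is also the stabilizing one (via $\theta(1-A_{\rm cl}A)=Q$, plus detectability when $Q=0$) is sound, except that $A_{\rm cl}^2<A^2D/(D+B^2\theta)$ should be non-strict so that it covers $A=0$.
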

\begin{IEEEproof}
    See Appendix \ref{appendix:proof_infinite_horizon_scalar}. 
\end{IEEEproof}
For the scalar case, the corresponding stability condition can be explicitly characterized.
According to the optimal control gain $k$, the closed-loop state-transition coefficient is given by $A_{\mathrm{cl}}\triangleq A-B\bar{m}k={AD}/{(D+B^2\theta)}$.
Therefore, the stabilizing condition reduces to $\left|A_{\mathrm{cl}}\right|=\left|\frac{AD}{D+B^2\theta}\right|<1$.
According to this condition, we have $A_{\mathrm{cl}}^{n}\rightarrow 0$ as $n\rightarrow\infty$, which guarantees that the plant-state variance converges to a finite stationary value. 
Therefore, the infinite-horizon average control cost in \eqref{eq:infinite_horizon_avg_control_cost_scalar} is finite.
Based on this corollary, we provide the following remark to explain the connection between the scalar and vector cases. 
\begin{remark}
    (Difference between the Scalar and Vector Cases)
    \emph{In the vector case, the steady-state Riccati matrix $\boldsymbol{\Theta}$ generally does not admit a closed-form solution and is typically obtained by numerically solving the corresponding \ac{dare}. 
    In contrast, in the scalar case, $\boldsymbol{\Theta}$ reduces to the scalar Riccati solution $\theta$, which admits a closed-form solution. 
    Similar to the vector case, the communication beamformer $\mathbf{w}_{\rm c}$ and the control beamformer $\mathbf{w}_{\rm p}$ affect the scalar control cost through the recovery coefficient $\bar{m}$ and the error variance $\sigma_e^2$.}
\end{remark}

\subsection{Pareto Boundary Characterization for the Scalar Case}
To characterize the trade-off between communication and control, we consider a Pareto optimization problem for this simplified scalar case.
Accordingly, for a given communication \ac{sinr} requirement
$\Gamma_{\rm c}$, the Pareto-boundary point can be obtained by
solving the following scalar beamforming problem:
\begin{problem}\label{pb:scalar_beamforming}
\begin{alignat}{2}
    \underset{\mathbf{w}_{\rm c},\mathbf{w}_{\rm p}}{\rm min} &\quad J_{\infty}\left(\mathbf{w}_{\rm c},\mathbf{w}_{\rm p}\right) \label{obj:scalar_bf} \\
{\rm s.t.} &\quad \gamma_{\rm c}\left(\mathbf{w}_{\rm c},\mathbf{w}_{\rm p} \right) \geq \Gamma_{\rm c}, \label{constraint:scalar_communication_qos} \\
&\quad \left\|\mathbf{w}_{\rm c}\right\|_2^2 + \left\|\mathbf{w}_{\rm p}\right\|_2^2 \leq P_{\max}. \label{constraint:scalar_transmit_power}
\end{alignat}
\end{problem}
To solve this problem, we first define the control-input \ac{sinr} as follows:
\begin{align}
    \rho _{\mathrm{p}}\left( \mathbf{w}_{\mathrm{c}},\mathbf{w}_{\mathrm{p}} \right) \triangleq \frac{2|\mathbf{h}_{\mathrm{p}}^{\mathsf{H}}\mathbf{w}_{\mathrm{p}}|^2}{|\mathbf{h}_{\mathrm{p}}^{\mathsf{H}}\mathbf{w}_{\mathrm{c}}|^2+\sigma _{\mathrm{p}}^{2}}, \label{eq:control_sinr}
\end{align}
where the factor of two follows from the complex-to-real mapping in \eqref{eq:real_valued_definitions}.
Based on this definition, the recovery coefficient $\bar{m}$ and the disturbance variance $\sigma_e^2$ in \eqref{eq:recovery_coeff_and_disturbance_variance} can be rewritten as follows:
\begin{align}
    \bar{m} =\frac{\rho_{\rm p}}{1+ \rho_{\rm p}},\quad \sigma_{e}^2 = \pi \frac{\rho_{\rm p}}{(1+\rho_{\rm p})^2}.  \notag
\end{align}
Using the above definitions, we present the following lemma to reformulate problem \eqref{pb:scalar_beamforming}.
\begin{lemma} \label{lemma:decreasing}
    (Monotonicity of $J_{\infty}(\mathbf{w}_{\rm c}, \mathbf{w}_{\rm p})$ w.r.t. $\rho_{\rm p}$) \emph{Whenever the effective closed-loop control coefficient $\bar{m}k$ is nonzero, the infinite-horizon average cost function is a strictly decreasing function with respect to the control \ac{sinr} $\rho_{\rm p}$, i.e., ${\mathrm{d} J_{\infty}}/{\mathrm{d} \rho_{\rm p}} <0$.}
\end{lemma}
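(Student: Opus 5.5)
The plan is to make every dependence on $\rho_{\rm p}$ in \eqref{eq:infinite_horizon_avg_control_cost_scalar} explicit and then differentiate. First, the steady-state Riccati coefficient $\theta$ in Corollary~\ref{corollary:control_input_scalar} depends only on $(A,B,Q,D)$. Hence $\theta$, $D+B^2\theta$, and the prefactor $A^2B^2\theta^2/(D+B^2\theta)$ are all independent of $\rho_{\rm p}$. The dependence on the beamformers therefore enters only through $\sigma_e^2$ and $\sigma_\epsilon^2$.

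A naive differentiation of $\sigma_e^2=\pi\rho_{\rm p}/(1+\rho_{\rm p})^2$ at fixed $\pi$ is not monotone, so the key first step is to eliminate $\pi$. Since $u_{n,\star}=-k\hat{x}_n$, we have $\pi=k^2\hat{\omega}$ with $\hat{\omega}\triangleq\mathbb{E}[\hat{x}_n^2]$. Define
\begin{align}
c\triangleq \bar{m}k=\frac{AB\theta}{D+B^2\theta}, \notag
\end{align}
which does not depend on $\rho_{\rm p}$ and is nonzero by hypothesis. Using $k=c/\bar{m}=c(1+\rho_{\rm p})/\rho_{\rm p}$ gives
\begin{align}
\sigma_e^2=\frac{c^2(1+\rho_{\rm p})^2}{\rho_{\rm p}^2}\,\hat{\omega}\,\frac{\rho_{\rm p}}{(1+\rho_{\rm p})^2}=\frac{c^2\hat{\omega}}{\rho_{\rm p}}. \notag
\end{align}
Thus the actuation-disturbance term is $c^2\hat{\omega}/\rho_{\rm p}$. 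This term is strictly decreasing in $\rho_{\rm p}$ for fixed $\hat{\omega}$, and it is exactly here that the condition $\bar m k\neq 0$ is used.

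Next, I would close the loop on $\hat{\omega}$ and $\sigma_\epsilon^2$, which themselves depend on $\rho_{\rm p}$ through the stationary statistics. From \eqref{eq:state_evolution_scalar} with $u_n=-k\hat{x}_n$ and $x_n=\hat{x}_n+\epsilon_n$, I write $x_{n+1}=(A-Bc)\hat{x}_n+A\epsilon_n+Be_n+v_n$. Using orthogonality of $\hat x_n$ and $\epsilon_n$ and independence of $e_n$ and $v_n$, the stationary variance satisfies
\begin{align}
\omega=A_{\rm cl}^2\hat{\omega}+A^2\sigma_\epsilon^2+B^2c^2\hat{\omega}/\rho_{\rm p}+\sigma_v^2, \notag
\end{align}
together with $\hat{\omega}=\omega-\sigma_\epsilon^2$. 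For the uplink, the normalization by $\omega^{-1/2}$ means the measurement noise seen on $x_{n+1}$ has variance $\kappa\omega$, where $\kappa\propto\sigma_{\rm f}^2/(P_{\rm p}\|\mathbf{h}_{\rm f}\|^2)$ is fixed. The steady-state KF error variance is then the posterior variance $\sigma_\epsilon^2=P\kappa\omega/(P+\kappa\omega)$. Here $P$ is the prior (one-step prediction) error variance, obtained from the scalar filtering Riccati equation with process-noise variance $\sigma_v^2+B^2\sigma_e^2$. Both $\sigma_\epsilon^2$ and $\sigma_e^2$ are increasing in $\omega$ and in $1/\rho_{\rm p}$. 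So this is a scalar fixed-point system $\omega=T(\omega,\rho_{\rm p})$ in which $T$ is increasing in $\omega$ and decreasing in $\rho_{\rm p}$.

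The final step is to apply the implicit function theorem (or a monotone comparison of fixed points) to show $\mathrm{d}\omega/\mathrm{d}\rho_{\rm p}<0$. This requires $\partial T/\partial\omega<1$ at the stationary point, which I would derive from the stabilizing condition $|A_{\rm cl}|<1$ and the existence of a finite stationary regime. It then follows that $\hat{\omega}/\rho_{\rm p}$ and $\sigma_\epsilon^2$ are nonincreasing in $\rho_{\rm p}$, with the first strictly decreasing because of the explicit factor $1/\rho_{\rm p}$. Since all coefficients in \eqref{eq:infinite_horizon_avg_control_cost_scalar} are nonnegative and $D+B^2\theta>0$, this gives $\mathrm{d}J_\infty/\mathrm{d}\rho_{\rm p}<0$.

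The main obstacle is this coupled fixed point. $\hat{\omega}=\omega-\sigma_\epsilon^2$ is a difference of two increasing quantities, and the uplink normalization ties $\sigma_\epsilon^2$ to $\omega$, so monotonicity of $\hat{\omega}$ is not automatic. I would first try to show that $\sigma_\epsilon^2/\omega$ is nonincreasing in $\omega$, which would make $\hat{\omega}$ increasing in $\omega$. Failing that, I would write the total derivative of $J_\infty$ directly via implicit differentiation and verify its sign.
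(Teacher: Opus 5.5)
You set up exactly the right system: $\theta$ and $c=\bar m k$ are free of $\rho_{\rm p}$, $\sigma_e^2=c^2\hat\omega/\rho_{\rm p}$, the stationary moment equation holds, and the uplink noise is $\kappa\omega$. However, the decisive step is never carried out, and the route you sketch for it does not hold up.

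The first problem is circularity. Your claim that $\sigma_e^2$, and hence $T$, is increasing in $\omega$ presupposes that $\hat\omega=\omega-\sigma_\epsilon^2$ is increasing in $\omega$. You yourself flag that as unproven, so the monotone-fixed-point argument rests on it.

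The second problem is more serious: $\partial T/\partial\omega<1$ cannot be derived from $|A_{\rm cl}|<1$. The actuation disturbance $B^2c^2\hat\omega/\rho_{\rm p}$ scales with the state's own second moment, so it acts as multiplicative noise. Even with a perfect uplink ($\sigma_\epsilon^2=0$), the loop gain is $A_{\rm cl}^2+B^2c^2/\rho_{\rm p}$. This exceeds $1$ whenever $\rho_{\rm p}<B^2c^2/(1-A_{\rm cl}^2)$, no matter that $|A_{\rm cl}|<1$. The inequality you need comes from the existence of a finite positive stationary $\omega$, not from stability of the nominal closed-loop gain.

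The missing idea is the scale invariance created by the uplink normalization, which your $\kappa\omega$ observation already makes available. Divide $\omega=A_{\rm cl}^2(\omega-\sigma_\epsilon^2)+P$ and $\sigma_\epsilon^2=P\kappa\omega/(P+\kappa\omega)$ by $\omega$. The ratios $t=\sigma_\epsilon^2/\omega$ and $p=P/\omega$ then satisfy $p=1-A_{\rm cl}^2(1-t)$ and $t=\kappa p/(\kappa+p)$, neither of which contains $\rho_{\rm p}$.

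It follows that:
\begin{itemize}
\item $t\equiv I_3$ is independent of $\rho_{\rm p}$;
\item $\hat\omega=(1-I_3)\omega$, $\sigma_\epsilon^2=I_3\omega$, and $\sigma_e^2=c^2(1-I_3)\omega/\rho_{\rm p}$;
\item the moment equation becomes linear in $\omega$, giving $\omega=\sigma_v^2\rho_{\rm p}/\bigl(I_4\rho_{\rm p}-B^2c^2(1-I_3)\bigr)$ with $I_4=1-A_{\rm cl}^2(1-I_3)-A^2I_3$.
\end{itemize}
Then $J_\infty$ is an explicit linear-fractional function of $\rho_{\rm p}$. Its derivative is $-(D+B^2\theta)c^2\sigma_v^2(1-I_3)(I_4+B^2c^2I_3)/\bigl(I_4\rho_{\rm p}-B^2c^2(1-I_3)\bigr)^2$. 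This is negative because $c\neq0$, $I_3<1$, and positivity of $\omega$ forces $I_4\rho_{\rm p}>B^2c^2(1-I_3)\ge0$.

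This is how the paper proceeds. Your first fallback points the right way, but the correct statement is not that $\sigma_\epsilon^2/\omega$ is nonincreasing in $\omega$. It is that this ratio is pinned by a $\rho_{\rm p}$-free equation, which removes both the fixed-point obstacle and the worry about $\hat\omega$.
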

\begin{IEEEproof}
    See Appendix \ref{appendix:decrease}.
\end{IEEEproof}
According to Lemma \ref{lemma:decreasing}, minimizing
$J_{\infty}$ is equivalent to maximizing the control \ac{sinr}
$\rho_{\rm p}$. 
Therefore, problem \eqref{pb:scalar_beamforming} can be reformulated as follows:
\begin{problem}\label{pb:scalar_pareto_boundary}
    \begin{alignat}{2}
    \underset{\mathbf{w}_{\rm c},\mathbf{w}_{\rm p}}{\rm max} &\quad \rho_{\rm p}\left(\mathbf{w}_{\rm c},\mathbf{w}_{\rm p}\right) \label{obj:scalar_control_sinr} \\
    {\rm s.t.} &\quad \gamma_{\rm c}\left(\mathbf{w}_{\rm c},\mathbf{w}_{\rm p}\right) \geq \Gamma_{\rm c}, \label{constraint:scalar_communication_qos_equivalent} \\
    &\quad \left\|\mathbf{w}_{\rm c}\right\|_2^2+\left\|\mathbf{w}_{\rm p}\right\|_2^2\leq P_{\max}. \label{constraint:scalar_transmit_power_equivalent}
    \end{alignat}
\end{problem}
By varying $\Gamma_{\rm c}$ over its feasible range, the Pareto boundary is obtained.
For each given $\Gamma_{\rm c}$, we employ a two-level procedure, where the outer loop searches for the maximum achievable control \ac{sinr} and the inner loop solves an \ac{socp} feasibility problem to determine the corresponding beamformers, i.e., $\{\mathbf{w}_{\rm c},\mathbf{w}_{\rm p} \}$.
More specifically, let $\varrho^{(i)} > 0$ be an auxiliary variable in the $i$-th outer-loop iteration, for which the constraint $\rho_{\rm p}\left(\mathbf{w}_{\rm c},\mathbf{w}_{\rm p}\right) \ge \varrho^{(i)}$ is imposed.
For a fixed $\varrho^{(i)}$, the objective function in \eqref{obj:scalar_control_sinr} can be represented in a second-order cone form according to Section~\ref{sect:beamforming_design_jcc_vector_valued}.
Additionally, constraint \eqref{constraint:scalar_communication_qos_equivalent} can also be represented in a second-order cone form, while constraint \eqref{constraint:scalar_transmit_power_equivalent} is already convex.
Consequently, for a fixed $\varrho^{(i)}$, the resulting inner problem is a convex problem and can be efficiently solved using standard convex optimization tools.

Outside the inner loop, the outer loop aims to determine the largest feasible value of $\varrho^{(i)}$.
Moreover, since feasibility is monotonic w.r.t. $\varrho^{(i)}$, the maximum feasible control \ac{sinr} can be efficiently determined via a bisection search.
Before performing the bisection search, we check the feasibility of the communication constraint, which requires $P_{\max}\geq\frac{\Gamma_{\rm c}\sigma_{\rm c}^{2}} {\|\mathbf{h}_{\rm c}\|_2^{2}}$.
If this inequality does not hold, the problem is infeasible, since the communication \ac{qos} requirement cannot be met even though all available power has been allocated to the communication beamformer and aligned with the direction of $\mathbf{h}_{\rm c}$. 
Once the problem's feasibility is checked, the lower and upper bounds of the bisection search interval can be obtained.
Since the control-input \ac{sinr} is non-negative, the lower bound is given by $\varrho_{\rm lb} = 0$.
Given that $|\mathbf{h}_{\rm p}^{\mathsf{H}}\mathbf{w}_{\rm p}|^2 \le P_{\max} \|\mathbf{h}_{\rm p}\|_2^2$ and $|\mathbf{h}_{\rm p}^{\mathsf{H}}\mathbf{w}_{\rm c}|^2 + \sigma_{\rm p}^2 \ge \sigma_{\rm p}^2$, the upper bound of the bisection search interval is given by $\varrho_{\rm ub} = {2P_{\max}\|\mathbf{h}_{\rm p}\|_2^2}/{\sigma_{\rm p}^2}$.
The overall algorithm is presented in Algorithm \ref{alg:scalar_socp_bisection}.
\begin{algorithm}[t!]
    \small
    \caption{SOCP-Based Bisection Algorithm}
    \label{alg:scalar_socp_bisection}
    \begin{algorithmic}[1]
        \STATE{Initialize $i=0$, $\varrho_{\rm lb}=0$, $\varrho_{\rm ub}$,
        and $\epsilon>0$\;}
        \REPEAT
            \STATE{Set
            $\varrho^{(i)}=(\varrho_{\rm lb}+\varrho_{\rm ub})/2$\;}
            \STATE{Solve the \ac{socp} feasibility problem \eqref{pb:scalar_pareto_boundary} with
            $\varrho=\varrho^{(i)}$\;}
            \IF{the problem is feasible}
                \STATE{Set $\varrho_{\rm lb}=\varrho^{(i)}$ and store the
                obtained beamformers\;}
            \ELSE
                \STATE{Set $\varrho_{\rm ub}=\varrho^{(i)}$\;}
            \ENDIF
            \STATE{Set $i=i+1$\;}
        \UNTIL{$\varrho_{\rm ub}-\varrho_{\rm lb}\leq\epsilon$}
        \RETURN{$\rho_{\rm p}^{\star}=\varrho_{\rm lb}$,
        $\mathbf{w}_{\rm c}^{\star}$, and
        $\mathbf{w}_{\rm p}^{\star}$}
    \end{algorithmic}
\end{algorithm}
Let $I_{\rm BIS}$ denote the number of bisection iterations.
Its complexity is given by $I_{\rm BIS}=\mathcal{O}\left(
\log_2((\varrho_{\rm ub}-\varrho_{\rm lb})/\epsilon)\right)$.
As the real-valued optimization variables contain $4M_{\rm t}$ entries, the complexity of solving each \ac{socp} via an interior-point method is $\mathcal{O}\left(M_{\rm t}^{3.5}\log(1/\delta)\right)$, where $\delta>0$ denotes the solver accuracy. 
Hence, the overall computational complexity of Algorithm \ref{alg:scalar_socp_bisection} is given by $\mathcal{O}\left(I_{\rm BIS}M_{\rm t}^{3.5}\log\left({1}/{\delta}\right)\right)$.

\begin{table}[t]
\centering
\caption{Simulation Parameters.}
\label{tab:sim_params}
\small
\renewcommand{\arraystretch}{0.95}
\setlength{\tabcolsep}{4pt}
\begin{tabular}{l c}
\hline
\textbf{Parameter} & \textbf{Value} \\
\hline
BS antennas $(M_{\rm t},M_{\rm r})$ & $(8,6)$ \\
Plant antennas/state dimension $(M_{\rm p},L)$ & $(2,2)$ \\
Number of CUs $K$ & $2$ \\
Power budgets $(P_{\max},P_{\rm p})$ & $(20,10)$ \\
Noise variances $(\sigma_{\rm c}^2,\sigma_{\rm p}^2,\sigma_{\rm f}^2)$
& $(1,1,1)$ \\
Process-noise covariance $\boldsymbol{\Sigma}_v$ & $0.02\mathbf I_L$ \\
LQG weights $(\mathbf Q,\mathbf D)$
& $(\mathbf I_L,0.2\mathbf I_L)$ \\
Plant input matrix $\mathbf B$ & $\mathbf I_L$ \\
Default CU SINR threshold $\Gamma$ & $3~\mathrm{dB}$ \\
MC runs / control-cost slots / tracking slots & $200/500/100$ \\
\hline
\end{tabular}
\end{table}

\section{Numerical Results} \label{sect:numerical_results}
The parameters in Table~\ref{tab:sim_params} are used for all simulations unless otherwise specified. 
In addition to these fixed parameters, all downlink and uplink channel coefficients are independently generated according to $\mathcal{CN}(0,1)$. 
For the vector case, the diagonal entries of $\mathbf{A}$ are uniformly spaced between $1$ and $1.5$, whereas its off-diagonal entries are independently drawn uniformly from $[0,0.1]$.
Additionally, $\mathbf{B}$ is set to $\mathbf{B}=\mathbf{I}_L$.
These settings not only account for open-loop dynamics but also ensure that the plant is controllable and can theoretically be stabilized by the feedback controller.
Several figure-specific settings are adopted according to the purpose of each experiment. 
For the scalar control-cost and state-tracking experiments, $(A,B)$ is set to $(1.25,1)$ and $(1.12,1)$, respectively. 
These settings allow us to isolate the effects of stochastic plant-state evolution and analyze the impact of channel-induced perturbations.
In the convergence test, $\mathbf{A}=1.2\mathbf{I}_L$, $M_{\rm t}=10$, $K=6$, and $\Gamma=5~\mathrm{dB}$ are used to provide a common operating point for the scalar and vector cases.
To validate the effectiveness of the proposed method, we introduce a benchmark referred to as ``Joint-ZF.''
In this benchmark, the communication and control beamforming directions are jointly designed to eliminate the interference between the two functionalities; then, the minimum power required to satisfy the communication \ac{sinr} constraints is allocated to the \acp{cu}, while the remaining power is used for control-input transmission.

Moreover, as indicated by \eqref{eq:control_cost_finite}, the \ac{lqg} cost penalizes both the plant-state deviation from the desired origin and the applied control effort. 
When the plant is stabilized around the desired origin, the absolute control cost is typically small, i.e., near zero. 
Hence, to better illustrate the control-performance loss incurred by supporting communication, we adopt the relative control-cost degradation as a normalized performance metric in the numerical results.
In particular, for each channel realization, this metric is defined as follows:
\begin{align}
    \Delta _J\triangleq \frac{J_{\infty}-J_{\infty}^{\mathrm{CO}}}{J_{\infty}^{\mathrm{CO}}}\times 100\%,
\end{align}
where $J_{\infty}$ is shorthand for $J_{\infty}(\mathbf{W}_{\rm c}, \mathbf{W}_{\rm p})$ and $J_{\infty}(\mathbf{w}_{\rm c}, \mathbf{w}_{\rm p})$ for the vector and scalar cases, respectively, and $J_{\infty}^{\rm CO}$ denotes the minimum infinite-horizon control cost achieved in the control-only case, i.e., when the precoder is optimized solely for control-cost minimization without imposing the communication \ac{sinr} constraint.
Note that $\Delta_J\ge0$ is introduced only for presenting and interpreting the numerical results, whereas the proposed beamforming designs are obtained by directly minimizing the original objectives of problems \eqref{pb:vector_beamforming} and \eqref{pb:scalar_beamforming}. 
Since $J_{\infty}^{\rm CO}>0$ is constant for a given channel realization, minimizing $\Delta_J$ is equivalent to minimizing $J_{\infty}$.
Hence, this normalization preserves the optimality of the solutions to the formulated problems. 
In fact, $\Delta_J$ measures the percentage increase in the long-term \ac{lqg} cost caused by satisfying the communication requirement relative to the best achievable control-only performance.
Accordingly, a smaller $\Delta_J$ indicates a more favorable communication-control trade-off.

\begin{figure}[t]
    \centering
    \subfloat[Vector Case.
    \label{fig:cost_convergence_vect}]{ 
        \includegraphics[height=0.5\linewidth]
        {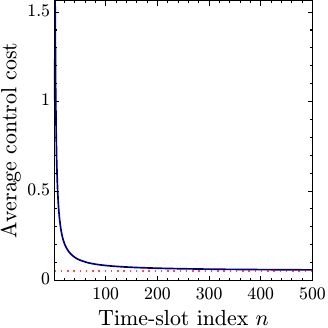}
    }
    \subfloat[Scalar Case.
    \label{fig:cost_convergence_sca}]{ 
        \includegraphics[height=0.5\linewidth]
        {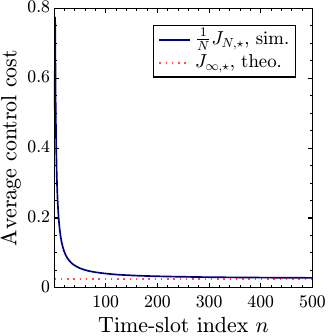}
    }
    \caption{Illustration of the convergence of the control cost for (a) the vector case and (b) the scalar case.}
    \label{fig:convergence}
\end{figure}
Fig.~\ref{fig:convergence} compares the empirical running-average \ac{lqg} costs with the corresponding theoretical infinite-horizon costs for both vector- and scalar-valued control inputs.
In particular, for a horizon of $N$ time slots, the empirical cost is averaged over the first $N$ time slots. 
As shown in Figs. \ref{fig:cost_convergence_vect} and \ref{fig:cost_convergence_sca}, the running-average control costs converge to their respective theoretical infinite-horizon values as $N$ increases, thereby validating Theorem~\ref{theorem:infinite_horizon} and Corollary~\ref{corollary:control_input_scalar}. 
The results also illustrate the transition of the closed-loop system from its initial transient response to steady-state operation under the designed control inputs.

\begin{figure}[t]
    \centering
    \subfloat[Vector Case.
    \label{fig:kf_convergence_vect}]{
        \includegraphics[height=0.5\linewidth] 
        {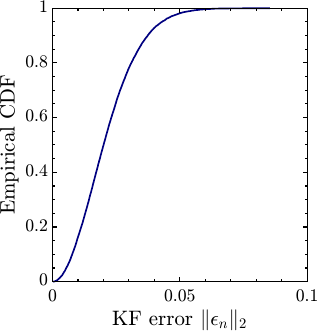}
    }
    \subfloat[Scalar Case.
    \label{fig:kf_convergence_sca}]{
        \includegraphics[height=0.5\linewidth] 
        {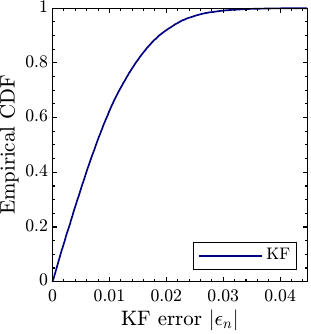}
    }
    \caption{Illustration of the empirical CDF of the \ac{kf} tracking process for (a) the vector case and (b) the scalar case. } \label{fig:kf_tracking_empirical_error}
\end{figure}
Fig.~\ref{fig:kf_tracking_empirical_error} illustrates the \ac{kf}-based state-tracking performance for both the vector and scalar cases. 
Specifically, the empirical \acp{cdf} are computed using the tracking errors over the last 100 time slots of 200 Monte Carlo runs, since the steady-state tracking performance of the \ac{kf} is of primary interest. 
Additionally, the tracking error is quantified using $\|\boldsymbol{\epsilon}_n\|_2$ for the vector case and $|\epsilon_n|$ for the scalar case.
As observed from the figure, most error realizations are below $0.05$ in the vector case and $0.04$ in the scalar case, thereby validating the effectiveness of the \ac{kf}-based tracking approach.
Taken together, Figs. \ref{fig:convergence} and \ref{fig:kf_tracking_empirical_error} validate the established wireless closed-loop control process, comprising downlink control-input transmission and uplink plant-state tracking.

\begin{figure}[t]
    \centering
    \subfloat[Vector Case.]{
        \includegraphics[height=0.5\linewidth] 
        {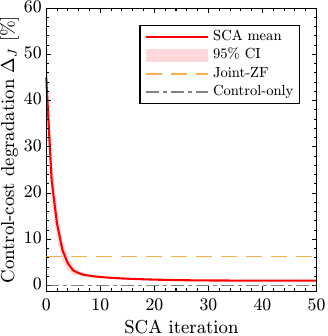}
    }
    \subfloat[Scalar Case.]{
        \includegraphics[height=0.5\linewidth] 
        {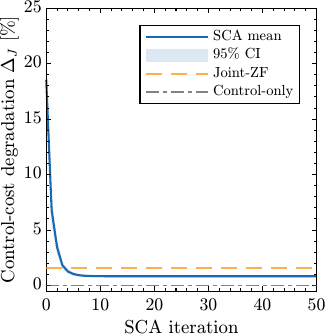}
    }
    \caption{Illustration of the convergence behavior for (a) the vector case and (b) the scalar case.} \label{fig:convergence_sca}
\end{figure}
Fig.~\ref{fig:convergence_sca} illustrates the convergence behavior of the proposed algorithms, where the 95~\% \ac{ci} quantifies the uncertainty in the mean estimated from the valid Monte Carlo trials.
In both cases, the control-cost degradation decreases monotonically and converges to a value lower than that of the ``Joint-ZF'' benchmark, verifying the effectiveness of the proposed algorithms.
The slower convergence in the vector case can be attributed to the stronger coupling among multiple control streams and communication beamformers.
The gap between the converged $\Delta_J$ and that of the \ac{zf} benchmark indicates the benefits of jointly designing the precoders instead of heuristically nulling interference.
Building on the above, we separately investigate the communication-control trade-off in the vector and scalar cases in Figs. \ref{fig:control_cost_degradation_vector} and \ref{fig:control_cost_degradation_scalar}, respectively.
The figures are presented separately for the following reasons: i) In the scalar case, a single \ac{cu} and one plant are considered, resulting in the Pareto boundary being defined by the optimal solution; and ii) in the vector case, the proposed \ac{sca}-based approach yields only a suboptimal solution but is capable of addressing more complex scenarios, i.e., higher-dimensional control inputs and multiple \acp{cu}.

\begin{figure}[t]
\centering
\includegraphics[width=0.85\linewidth]{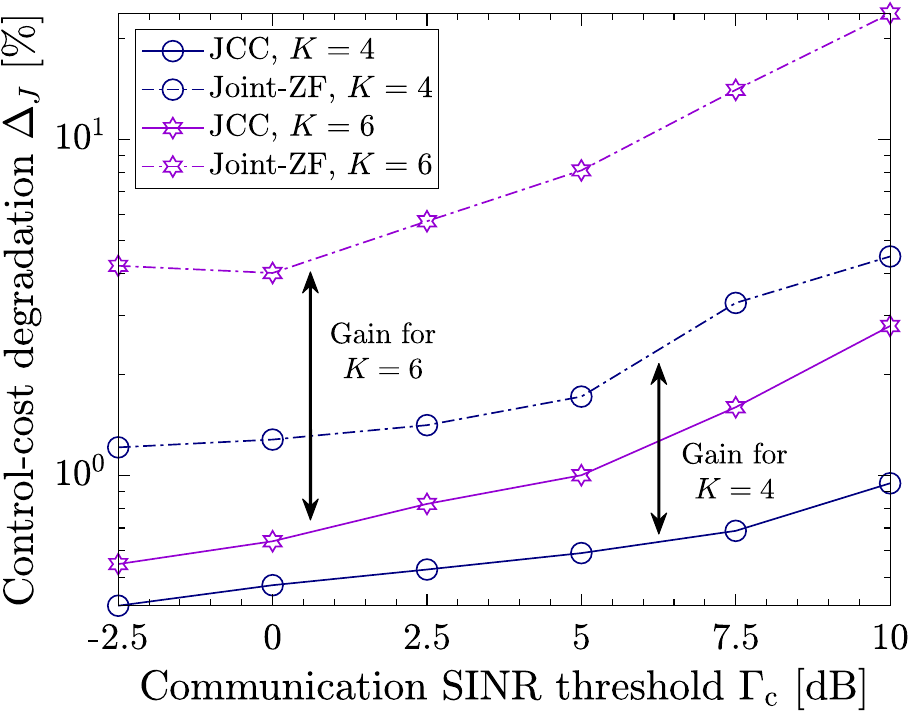} 
\caption{Illustration of the control-cost degradation versus the communication \ac{sinr} threshold for $M_{\rm t}=10$.}
\label{fig:control_cost_degradation_vector}
\end{figure}
Fig.~\ref{fig:control_cost_degradation_vector} illustrates the communication-control trade-off in the vector case for $M_{\rm t}=10$. 
In particular, $\Delta_J$, which represents the relative control-performance degradation, increases with both the communication \ac{sinr} threshold and the number of \acp{cu}.
This is because more transmit power and spatial \ac{dofs} must be allocated to communication to satisfy the minimum \ac{qos} requirements.
Moreover, the proposed \ac{jcc} design consistently outperforms the ``Joint-ZF'' benchmark, reducing $\Delta_J$ by 79.0~\% for $K=4$ at $7.5$~dB and by 87.6~\% for $K=6$ at $5$~dB.
These results indicate: i) When sufficient spatial \ac{dofs} are available, the ``Joint-ZF'' benchmark can achieve a performance comparable to that of the proposed algorithm; and ii) when the spatial \ac{dofs} are limited, completely nulling the interference becomes overly conservative, particularly when more \acp{cu} compete for the available spatial resources. 
In this case, the proposed algorithm achieves better performance by balancing inter-user and inter-function interference rather than completely eliminating it using \ac{zf}.

\begin{figure}[t]
\centering
\includegraphics[width=0.85\linewidth] 
{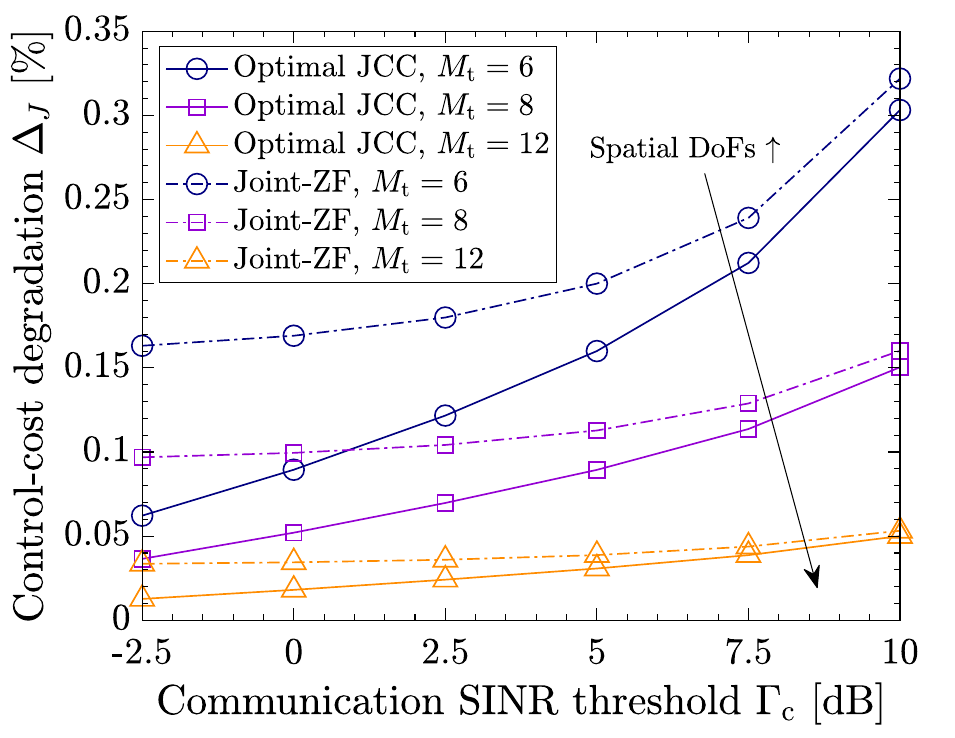}
\caption{Illustration of the optimal communication-control trade-off for the single-CU, single-plant scalar case for different $M_{\rm t}$.}
\label{fig:control_cost_degradation_scalar}
\end{figure}
Fig.~\ref{fig:control_cost_degradation_scalar} characterizes the optimal Pareto trade-off between communication and control in the scalar case.
In this case, the \ac{jcc} problem admits a globally optimal solution. 
As shown in the figure, the control-performance degradation measured by $\Delta_J$ increases with the communication \ac{sinr} threshold due to the competition between the communication and control functionalities.
Moreover, increasing the number of transmit antennas, denoted by $M_{\rm t}$, improves the control performance by providing additional array gain and spatial \ac{dofs}. 
It can also be observed that the proposed \ac{jcc} design consistently outperforms the ``Joint-ZF'' benchmark, as the former achieves the globally optimal solution in this simplified case. 
As shown in Fig.~\ref{fig:control_cost_degradation_scalar}, more stringent communication \ac{sinr} requirements necessitate complete interference suppression, which narrows the performance gap between the optimal solution and the benchmark.
In contrast, reducing the number of transmit antennas limits the available spatial \ac{dofs}, thereby widening the performance gap.

\section{Conclusions} \label{sect:conclusions}
This paper investigated a \ac{mimo} \ac{jcc} system, in which a multi-antenna \ac{bs} simultaneously served multiple \acp{cu} and controlled a physical plant.
Since the communication signals and control inputs were transmitted over shared spatial resources, both inter-user and inter-function interference needed to be addressed.
A wireless closed-loop control framework was considered, including downlink control-input recovery using an \ac{lmmse} receiver at the plant, as well as uplink state reporting and \ac{kf}-based tracking.
Within this framework, the finite-horizon control cost was first derived and then generalized to its infinite-horizon counterpart under the stationary and stabilizing conditions.
The derivations accounted for control-process noise, wireless actuation impairments, and state-estimation errors, thereby coupling beamforming with the long-term control process.
For vector-valued control inputs, a \ac{jcc} beamforming problem was formulated to minimize the infinite-horizon cost subject to communication \ac{qos} and transmit-power constraints. 
The resulting nonconvex problem was addressed using an \ac{socp}-based \ac{sca} algorithm with backtracking. 
For the scalar control-input case, a closed-form expression for the infinite-horizon control cost was first derived, and the communication-control Pareto boundary was optimally characterized for a one-\ac{cu}-one-plant scenario using an \ac{socp}-based bisection algorithm.
Simulation results verified the derived \ac{lqg} control cost, the \ac{kf}-based state tracking, and the convergence of the proposed algorithms. 
Furthermore, they demonstrated that properly balancing inter-function interference is more effective than completely nulling it via \ac{zf}, particularly when the spatial resources are limited.

\appendices
\section{Implementation of Kalman Filter}\label{appendix:implementation_of_kf}
Since the KF method is well documented, we simply provide a sketch of the implementation in this appendix.
The purpose of the \ac{kf} approach is to enable tracking of the true plant state $\mathbf{x}_n$ at the \ac{bs}.
According to the signal model in Section \ref{sect:uplink_modeling}, the state-transition model and the observation model can be expressed as follows:
\begin{align}
    \begin{cases}
	\mathbf{x}_n=\mathbf{Ax}_{n-1}+\mathbf{B}\bar{\mathbf{M}}_{n-1}\mathbf{u}_{n-1}+\mathbf{Be}_{n-1}+\mathbf{v}_{n-1},\\
	\mathbf{c}_n=\bar{\mathbf{H}}_n\mathbf{x}_n+\mathbf{n}_n, \\
\end{cases}
\end{align}
where the control input $\mathbf{u}_{n-1}$ is generated by the \ac{bs} and is therefore available to the \ac{bs}, whereas the instantaneous realizations of the recovery error $\mathbf{e}_{n-1}$ and the control-process noise $\mathbf{v}_{n-1}$ are generally unknown to the \ac{bs}.
Based on the statistical independence assumptions in Section~\ref{sect:system_model}, $\mathbf{e}_{n-1}$ and $\mathbf{v}_{n-1}$ are zero-mean and mutually independent, and their covariance matrices are given by $\boldsymbol{\Sigma}_{e, n-1}$ and $\boldsymbol{\Sigma}_v$, respectively.
Thus, the effective process noise in the \ac{kf} module is defined as $\mathbf{q}_{n-1} \triangleq \mathbf{Be}_{n-1}+\mathbf{v}_{n-1}$, which is zero-mean with a covariance of $\boldsymbol{\Sigma}_{q,n-1} = \mathbb{E}[\mathbf{q}_{n-1}\mathbf{q}_{n-1}^{\mathsf{T}}] = \mathbf{B}\boldsymbol{\Sigma}_{e, n-1}\mathbf{B}^{\mathsf{T}} + \boldsymbol{\Sigma}_v$.
Since the plant state is real-valued, we define the following variables:
\begin{align}
    \breve{\mathbf{c}}_n\triangleq \left[ \begin{array}{c}
	\Re \left\{ \mathbf{c}_{n}^{} \right\}\\
	\Im \left\{ \mathbf{c}_n \right\}\\
\end{array} \right] \in \mathbb{R} ^{2M_{\mathrm{r}}\times 1},\breve{\mathbf{H}}_n\triangleq \left[ \begin{array}{c}
	\Re \{\bar{\mathbf{H}}_n\}\\
	\Im \{\bar{\mathbf{H}}_n\}\\
\end{array} \right] \in \mathbb{R} ^{2M_{\mathrm{r}}\times L}. \notag
\end{align}
By using these definitions, the observation model can be rewritten as $\breve{\mathbf{c}}_n = \breve{\mathbf{H}}_n \mathbf{x}_n + \breve{\mathbf{n}}_n$, where $\breve{\mathbf{n}}_n \sim \mathcal{N}(\boldsymbol{0}_{2 M_{\mathrm{r}}}, \mathbf{R})$ and $\mathbf{R} \triangleq ({\sigma_{f}^2}/{2})\mathbf{I}_{2 M_{\mathrm{r}}}$.
Based on the above model, the \ac{kf} recursion at the BS consists of the following prediction and correction steps.
\begin{itemize}
    \item \textbf{Prior Prediction:} Based on the state-evolution model, the prior state estimate can be computed as $\hat{\mathbf{x}}_{n\mid n-1} = \mathbf{A}\hat{\mathbf{x}}_{n-1 \mid n-1} + \mathbf{B} \bar{\mathbf{M}}_{n-1} \mathbf{u}_{n-1}$, with the unknown $\mathbf{e}_{n-1}$ and $\mathbf{v}_{n-1}$ treated as process noise.
    Correspondingly, the prior error covariance matrix is updated as $\mathbf{C}_{n \mid n-1} = \mathbf{A}\mathbf{C}_{n-1 \mid n-1}\mathbf{A}^{\mathsf{T}} + \boldsymbol{\Sigma}_{q, n-1}$, where $\mathbf{C}_{n \mid n-1}$ and $\mathbf{C}_{n-1 \mid n-1}$ denote the prior and posterior \ac{kf} error covariance matrices, respectively. 

    \item \textbf{Posterior Update:} After receiving the uplink state report from the plant, the \ac{bs} computes the innovation vector as $\boldsymbol{\xi }_n=\breve{\mathbf{c}}_n-\breve{{\mathbf{H}}}_n\hat{\mathbf{x}}_{n\mid n-1}$ and the innovation covariance matrix $\mathbf{S}_n=\breve{\mathbf{H}}_n^{}\mathbf{C}_{n\mid n-1}\breve{\mathbf{H}}_n^{\mathsf{T}}+\mathbf{R}$.
    As such, the Kalman gain can be computed as $\mathbf{L}_n=\mathbf{C}_{n\mid n-1}\breve{\mathbf{H}}_n^{\mathsf{T}}\mathbf{S}_{n}^{-1}$.
    Based on the Kalman gain, the posterior state estimate is updated as $\hat{\mathbf{x}}_{n\mid n}=\hat{\mathbf{x}}_{n\mid n-1}+\mathbf{L}_n\boldsymbol{\xi }_n$, while the posterior error covariance is updated as $\mathbf{C}_{n\mid n}=( \mathbf{I}_L-\mathbf{L}_n\breve{\mathbf{H}}_n) \mathbf{C}_{n\mid n-1}$.
\end{itemize}
This completes the \ac{kf} implementation. 

\section{Proof of Theorem \ref{theorem:finite_horizon}} \label{appendix:proof_finite_horizon}
To prove this theorem, we first define the value function with the estimated plant state $\hat{\mathbf{x}}_n$ as follows:
\begin{align}
    &V_n\left( \hat{\mathbf{x}}_{n} \right) \triangleq \notag \\
    &\min _{\left\{ \mathbf{u}_i \right\} _{i=n}^{N-1}}\mathbb{E} \left[ \sum\nolimits_{i=n}^{N-1}{\left( \mathbf{x}_{i}^{\mathsf{T}}\mathbf{Qx}_{i}^{}+\hat{\mathbf{u}}_{i}^{\mathsf{T}}\mathbf{D}\hat{\mathbf{u}}_i \right) +\mathbf{x}_{N}^{\mathsf{T}}\mathbf{Sx}_{N}^{}} \mid \hat{\mathbf{x}}_n \right], \notag
\end{align}
which characterizes the minimum expected cumulative cost from time slot $n$ to the terminal time slot, given the current \ac{kf} state estimate $\hat{\mathbf{x}}_n$.
Recall that the true plant state and its \ac{kf} estimate satisfy $\mathbf{x}_n = \hat{\mathbf{x}}_n + \boldsymbol{\epsilon}_n$, where $\mathbb{E}[\boldsymbol{\epsilon}_n \mid \hat{\mathbf{x}}_n] = \boldsymbol{0}_L$ and $\mathbf{C}_n \triangleq \mathbb{E}[\boldsymbol{\epsilon}_n^{}\boldsymbol{\epsilon}_n^{\mathsf{T}}]$.
Since no further control input is applied after the terminal state, the terminal value function is given by $V_N\left( \hat{\mathbf{x}}_N \right) =\mathbb{E} \left[ \mathbf{x}_{N}^{\mathsf{T}}\mathbf{Sx}_{N}^{}\mid \hat{\mathbf{x}}_{N}^{} \right] =\hat{\mathbf{x}}_{N}^{\mathsf{T}}\mathbf{S}\hat{\mathbf{x}}_{N}^{}+\mathrm{tr}\left\{ \mathbf{SC}_N \right\}$.
To derive the optimal control policy, we apply Bellman's principle in a backward fashion. Specifically, for the estimated state at time slot $N-1$, Bellman's principle gives
\begin{align}
    V_{N-1}\left( \hat{\mathbf{x}}_{N-1} \right) &=\min_{\mathbf{u}_{N-1}} \mathbb{E} \left[ \mathbf{x}_{N-1}^{\mathsf{T}}\mathbf{Qx}_{N-1}+\hat{\mathbf{u}}_{N-1}^{\mathsf{T}}\mathbf{D}\hat{\mathbf{u}}_{N-1}\right. \notag \\
    &\left.+V_N\left( \hat{\mathbf{x}}_N \right) \mid \hat{\mathbf{x}}_{N-1} \right] . \label{eq:bellman_N_1}
\end{align}
Recall that $\hat{\mathbf{u}}_{N-1}=\bar{\mathbf{M}}_{N-1}\mathbf{u}_{N-1}+\mathbf{e}_{N-1}$ and $\mathbf{x}_N=\mathbf{Ax}_{N-1}+\mathbf{B}\bar{\mathbf{M}}_{N-1}\mathbf{u}_{N-1}+\mathbf{Be}_{N-1}+\mathbf{v}_{N-1}$.
Using the relationship $\mathbf{x}_{N-1} \triangleq \hat{\mathbf{x}}_{N-1} + \boldsymbol{\epsilon}_{N-1}$, the terminal state can be further written as follows:
\begin{align}
    \mathbf{x}_N=\boldsymbol{\mu }_N+\mathbf{A}\boldsymbol{\epsilon }_{N-1}+\mathbf{Be}_{N-1}+\mathbf{v}_{N-1},
\end{align}
where $\boldsymbol{\mu }_N\triangleq \mathbf{A} \hat{\mathbf{x}}_{N-1}+\mathbf{B}\bar{\mathbf{M}}_{N-1}\mathbf{u}_{N-1}$.
Moreover, using the fact that $\boldsymbol{\epsilon}_{N-1}$, $\mathbf{e}_{N-1}$, and $\mathbf{v}_{N-1}$ are zero-mean and mutually independent, the three expected cost terms in \eqref{eq:bellman_N_1} can be respectively computed as follows:
\begin{align}
    \mathbb{E} \left[ \mathbf{x}_{N-1}^{\mathsf{T}}\mathbf{Qx}_{N-1}^{}\mid \hat{\mathbf{x}}_{N-1}^{} \right] &=\hat{\mathbf{x}}_{N-1}^{\mathsf{T}}\mathbf{Q}\hat{\mathbf{x}}_{N-1}^{}+\mathrm{tr}\left\{ \mathbf{QC}_{N-1} \right\}, \notag \\
    \mathbb{E} \left[ \hat{\mathbf{u}}_{N-1}^{\mathsf{T}}\mathbf{D} \hat{\mathbf{u}}_{N-1}^{}\mid \hat{\mathbf{x}}_{N-1}^{} \right] &=\mathbf{u}_{N-1}^{\mathsf{T}}\bar{\mathbf{M}}_{N-1}^{\mathsf{T}}\mathbf{D}\bar{\mathbf{M}}_{N-1}\mathbf{u}_{N-1}^{} \notag \\
    &+\mathrm{tr}\left\{ \mathbf{D}\boldsymbol{\Sigma }_{e,N-1} \right\}, \notag\\
    \mathbb{E} \left[ \mathbf{x}_{N}^{\mathsf{T}}\mathbf{Sx}_{N}^{}\mid \hat{\mathbf{x}}_{N-1}^{} \right] &=\boldsymbol{\mu }_{N}^{\mathsf{T}}\mathbf{S}\boldsymbol{\mu }_{N}^{}+\mathrm{tr}\left\{ \mathbf{A}^{\mathsf{T}}\mathbf{SA}^{}\mathbf{C}_{N-1} \right\} \notag \\
    &+\mathrm{tr}\left\{ \mathbf{SB}\boldsymbol{\Sigma }_{e,N-1}\mathbf{B}^{\mathsf{T}} \right\} +\mathrm{tr}\left\{ \mathbf{S}\boldsymbol{\Sigma }_v \right\}. \notag
\end{align}
Substituting the above three expressions back into the Bellman equation yields the following equation:
\begin{align}
    &V_{N-1}\left( \hat{\mathbf{x}}_{N-1} \right) =\notag \\
    &\quad \min_{\mathbf{u}_{N-1}} \left\{ \hat{\mathbf{x}}_{N-1}^{\mathsf{T}}\mathbf{Q}\hat{\mathbf{x}}_{N-1}^{}+\mathbf{u}_{N-1}^{\mathsf{T}}\bar{\mathbf{M}}_{N-1}^{\mathsf{T}}\mathbf{D}\bar{\mathbf{M}}_{N-1}\mathbf{u}_{N-1} \right. \notag \\
    &\quad\left.+\boldsymbol{\mu }_{N}^{\mathsf{T}}\mathbf{S}\boldsymbol{\mu }_{N}^{}+\mathrm{tr}\left\{ \left( \mathbf{Q}+\mathbf{A}^{\mathsf{T}}\mathbf{SA}^{} \right) \mathbf{C}_{N-1} \right\} \right. \notag \\
    &\quad\left.+\mathrm{tr}\left\{ \left( \mathbf{D}+\mathbf{B}^{\mathsf{T}}\mathbf{SB} \right) \boldsymbol{\Sigma }_{e,N-1} \right\} +\mathrm{tr}\left\{ \mathbf{S}\boldsymbol{\Sigma }_v \right\} \right\}. \label{eq:bellman_N_2}
\end{align}
To derive the optimal control input, we further expand 
$\boldsymbol{\mu}_{N}^{\mathsf{T}}\mathbf{S}\boldsymbol{\mu}_{N}$ in the above equation and collect all the terms containing $\mathbf{u}_{N-1}$.
Therefore, \eqref{eq:bellman_N_2} can be rewritten as follows:
\begin{align}
    &V_{N-1}\left( \hat{\mathbf{x}}_{N-1} \right) =\min_{\mathbf{u}_{N-1}} \left\{ \mathbf{u}_{N-1}^{\mathsf{T}}\boldsymbol{\Phi }_{N-1}\mathbf{u}_{N-1} +\mathrm{tr}\left\{ \mathbf{S}\boldsymbol{\Sigma }_v \right\}\right.\notag \\
    &\left. +2\hat{\mathbf{x}}_{N-1}^{\mathsf{T}}\mathbf{A}^{\mathsf{T}}\mathbf{SB}\bar{\mathbf{M}}_{N-1}\mathbf{u}_{N-1}+\hat{\mathbf{x}}_{N-1}^{\mathsf{T}}\left( \mathbf{Q}+\mathbf{A}^{\mathsf{T}}\mathbf{SA} \right) \hat{\mathbf{x}}_{N-1} \right. \notag \\
    &\left.+\mathrm{tr}\left\{ \left( \mathbf{Q}+\mathbf{A}^{\mathsf{T}}\mathbf{SA} \right) \mathbf{C}_{N-1} \right\} +\mathrm{tr}\left\{ \left( \mathbf{D}+\mathbf{B}^{\mathsf{T}}\mathbf{SB} \right) \boldsymbol{\Sigma }_{e,N-1} \right\}  \right\},
    \label{eq:bellman_N_3}
\end{align}
where the auxiliary matrix $\boldsymbol{\Phi}_{N-1}$ is defined as
\begin{align}
    \boldsymbol{\Phi}_{N-1}\triangleq\bar{\mathbf{M}}_{N-1}^{\mathsf{T}}\left(\mathbf{D}+\mathbf{B}^{\mathsf{T}}\mathbf{S}\mathbf{B}\right)\bar{\mathbf{M}}_{N-1}.
\end{align}
Leveraging the quadratic expression in \eqref{eq:bellman_N_3}, the optimal control input $\mathbf{u}_{N-1, \star}$ can be derived by taking the gradient with respect to $\mathbf{u}_{N-1}$ and setting it equal to zero.
Hence, the optimal control input is given by 
\begin{align}
    \mathbf{u}_{N-1,\star}=-\boldsymbol{\Phi }_{N-1}^{-1}\bar{\mathbf{M}}_{N-1}^{\mathsf{T}}\mathbf{B}^{\mathsf{T}}\mathbf{SA}\hat{\mathbf{x}}_{N-1}= -\mathbf{K}_{N-1}\hat{\mathbf{x}}_{N-1}, \notag 
\end{align}
where the control gain matrix is defined as $\mathbf{K}_{N-1}\triangleq \boldsymbol{\Phi }_{N-1}^{-1}\bar{\mathbf{M}}_{N-1}^{\mathsf{T}}\mathbf{B}^{\mathsf{T}}\mathbf{SA}$.
By substituting $\mathbf{u}_{N-1, \star}$ back into the value function in \eqref{eq:bellman_N_2}, we have 
\begin{align}
    &V_{N-1}\left( \hat{\mathbf{x}}_{N-1} \right) =\notag \\
    &\qquad \qquad \hat{\mathbf{x}}_{N-1}^{\mathsf{T}}\boldsymbol{\Theta }_{N-1}\hat{\mathbf{x}}_{N-1}^{}+\mathrm{tr}\left\{ \boldsymbol{\Theta }_{N-1}\mathbf{C}_{N-1} \right\} +c_{N-1}, \notag 
\end{align}
where the Riccati matrix and the constant term are respectively defined as follows:
\begin{align}
    \boldsymbol{\Theta }_{N-1}&\triangleq \mathbf{Q}+\mathbf{A}^{\mathsf{T}}\mathbf{SA}-\mathbf{A}^{\mathsf{T}}\mathbf{SB}\bar{\mathbf{M}}_{N-1}\boldsymbol{\Phi }_{N-1}^{-1}\bar{\mathbf{M}}_{N-1}^{\mathsf{T}}\mathbf{B}^{\mathsf{T}}\mathbf{SA}, \notag \\
    c_{N-1}&\triangleq \mathrm{tr}\left\{ \mathbf{S}\boldsymbol{\Sigma }_v \right\} +\mathrm{tr}\left\{ \left( \mathbf{D}+\mathbf{B}^{\mathsf{T}}\mathbf{SB} \right) \boldsymbol{\Sigma }_{e, N-1} \right\} \notag \\
    &+\mathrm{tr}\left\{ \mathbf{K}_{N-1}^{\mathsf{T}}\boldsymbol{\Phi }_{N-1}^{}\mathbf{K}_{N-1}^{}\mathbf{C}_{N-1} \right\}. \notag 
\end{align}
Note that the last term in $c_{N-1}$ quantifies the additional control cost due to the \ac{kf} state-estimation error.
Therefore, $V_{N-1}\left( \hat{\mathbf{x}}_{N-1} \right)$ has the same quadratic form as the terminal value function $V_{N}\left( \hat{\mathbf{x}}_{N} \right)$ with an additional constant.
Following the above derivations, we can apply Bellman's recursion backward from time slot $N$.
For example, for time slot $N-2$, the optimal value function is obtained by solving 
\begin{align}
    V_{N-2}\left( \hat{\mathbf{x}}_{N-2} \right) &=\min_{\mathbf{u}_{N-2}} \mathbb{E} \left[ \mathbf{x}_{N-2}^{\mathsf{T}}\mathbf{Qx}_{N-2}^{}+\hat{\mathbf{u}}_{N-2}^{\mathsf{T}}\mathbf{D}\hat{\mathbf{u}}_{N-2}^{}\right. \notag \\
    &\left.+V_{N-1}\left( \hat{\mathbf{x}}_{N-1}^{} \right) \mid \hat{\mathbf{x}}_{N-2}^{} \right]. \notag 
\end{align}
Repeating the same argument backward from time slot $N-1$ to time slot $1$, we obtain 
\begin{align}
    \mathbf{u}_{n,\star} = - \mathbf{K}_n \hat{\mathbf{x}}_n, 
\end{align}
where the control gain and the auxiliary matrix are respectively given by 
\begin{align}
    \mathbf{K}_n &\triangleq \mathbf{\Phi }_{n}^{-1}\bar{\mathbf{M}}_{n}^{\mathsf{T}}\mathbf{B}^{\mathsf{T}}\boldsymbol{\Theta}_{n+1}\mathbf{A}, \\
    \boldsymbol{\Phi }_n&\triangleq \bar{\mathbf{M}}_n^{\mathsf{T}}\left( \mathbf{D}+\mathbf{B}^{\mathsf{T}}\boldsymbol{\Theta }_{n+1}\mathbf{B}^{} \right) \bar{\mathbf{M}}_n^{}.
\end{align}
The value function at time slot $n$ is given by 
\begin{align}
    V_n(\hat{\mathbf{x}}_n) = \hat{\mathbf{x}}_n^{\textsf{T}}\boldsymbol{\Theta}_n\hat{\mathbf{x}}_n^{} + \mathrm{tr}\{\boldsymbol{\Theta}_n \mathbf{C}_n \} + c_n,
\end{align}
where the following definitions are needed:
\begin{align}
    \boldsymbol{\Theta }_n&\triangleq \mathbf{Q}+\mathbf{A}^{\mathsf{T}}\boldsymbol{\Theta }_{n+1}\mathbf{A} \notag \\
    &-\mathbf{A}^{\mathsf{T}}\boldsymbol{\Theta }_{n+1}\mathbf{B}\bar{\mathbf{M}}_n^{}\boldsymbol{\Phi }_{n}^{-1}\bar{\mathbf{M}}_n^{\mathsf{T}}\mathbf{B}^{\mathsf{T}}\boldsymbol{\Theta }_{n+1}\mathbf{A},
    \\
    c_n &=c_{n+1}+l_n, \\
   l_n&\triangleq \mathrm{tr}\left\{ \boldsymbol{\Theta }_{n+1}\boldsymbol{\Sigma }_v \right\} +\mathrm{tr}\left\{ \left( \mathbf{D}+\mathbf{B}^{\mathsf{T}}\boldsymbol{\Theta }_{n+1}\mathbf{B} \right) \boldsymbol{\Sigma }_{e,n} \right\} \notag \\
   &+\mathrm{tr}\left\{ \mathbf{K}_{n}^{\mathsf{T}}\boldsymbol{\Phi }_{n}^{}\mathbf{K}_{n}^{}\mathbf{C}_n \right\}.
\end{align}
Since $c_N=0$, we have $c_1 = \sum\nolimits_{n=1}^{N-1}l_n$.
Moreover, given that $V_1(\hat{\mathbf{x}}_1)$ is the value function associated with the initial \ac{kf} state estimate, the optimal finite-horizon control cost can be expressed as follows:
\begin{align}
    J_{N,\star}=\mathbb{E} \left[ V_1\left( \hat{\mathbf{x}}_1 \right) \right] =\mathbb{E} \left[ \hat{\mathbf{x}}_{1}^{\mathsf{T}}\boldsymbol{\Theta }_1\hat{\mathbf{x}}_{1}^{} \right] +\mathrm{tr}\left\{ \boldsymbol{\Theta }_1\mathbf{C}_1 \right\} +\sum\nolimits_{n=1}^{N-1}{l_n}. \notag
\end{align}
By leveraging $\mathbb{E} \left[ \mathbf{x}_{1}^{\mathsf{T}}\boldsymbol{\Theta }_1\mathbf{x}_{1}^{} \right] =\mathbb{E} \left[ \hat{\mathbf{x}}_{1}^{\mathsf{T}}\boldsymbol{\Theta }_1\hat{\mathbf{x}}_{1}^{} \right] +\mathrm{tr}\left\{ \boldsymbol{\Theta }_1\mathbf{C}_1 \right\} $, we finally obtain 
\begin{align}
    J_{N,\star}&=\mathbb{E} \left[ \mathbf{x}_{1}^{\mathsf{T}}\boldsymbol{\Theta }_1\mathbf{x}_{1}^{} \right] +\sum\nolimits_{n=1}^{N-1}{\mathrm{tr}\left\{ \boldsymbol{\Theta }_{n+1}\boldsymbol{\Sigma }_v \right\}}\notag \\
    &+\sum\nolimits_{n=1}^{N-1}{\mathrm{tr}\left\{ \left( \mathbf{D}+\mathbf{B}^{\mathsf{T}}\boldsymbol{\Theta }_{n+1}\mathbf{B} \right) \boldsymbol{\Sigma }_{e,n} \right\}}\notag \\
    &+\sum\nolimits_{n=1}^{N-1}{\mathrm{tr}\left\{ \mathbf{K}_{n}^{\mathsf{T}}\boldsymbol{\Phi }_{n}^{}\mathbf{K}_{n}^{}\mathbf{C}_n \right\}}.
\end{align}
This completes the proof of Theorem \ref{theorem:finite_horizon}.

\section{Proof of Theorem \ref{theorem:infinite_horizon}} \label{appendix:proof_infinite_horizon}
In this appendix, we investigate the generalization from the finite-horizon expression of $J_{N, \star}$ to the infinite-horizon counterpart. 
Based on the results in Theorem \ref{theorem:finite_horizon}, we define the optimal finite-horizon average cost as follows: 
\begin{align}
    \frac{1}{N}J_{N,\star}&=\frac{1}{N}\mathbb{E} \left[ \mathbf{x}_{1}^{\mathsf{T}}\boldsymbol{\Theta}_1\mathbf{x}_{1}^{} \right] +\frac{1}{N}\sum\nolimits_{n=1}^{N-1}{\mathrm{tr}\left\{ \boldsymbol{\Theta}_{n+1}\mathbf{\Sigma }_v \right\}} \notag \\
    & +\frac{1}{N}\sum\nolimits_{n=1}^{N-1}{\mathrm{tr}\left\{ \left( \mathbf{D}+\mathbf{B}^{\mathsf{T}}\boldsymbol{\Theta}_{n+1}\mathbf{B} \right) \mathbf{\Sigma }_{e,n} \right\}} \notag \\
    &+\frac{1}{N}\sum\nolimits_{n=1}^{N-1}{\mathrm{tr}\left\{ \mathbf{K}_{n}^{\mathsf{T}}\mathbf{\Phi }_n\mathbf{K}_n\mathbf{C}_n \right\}},
\end{align}
where $\{\boldsymbol{\Theta}_n\}_{n=1}^N$ denotes the Riccati matrices defined recursively in \eqref{eq:riccati_matrix}.
To explicitly indicate the dependence on the finite horizon $N$, we include the horizon length as a superscript.
Accordingly, the Riccati matrix at time slot $n$ is denoted by $\boldsymbol{\Theta}_n^{(N)}$.
Under the standard stabilizing conditions for the infinite-horizon quadratic control problem, the finite-horizon Riccati recursion converges to the steady-state Riccati matrix, which is given by
\begin{align}
    \boldsymbol{\Theta}_{n}^{(N)} \rightarrow \boldsymbol{\Theta},\quad N-n \rightarrow \infty, \label{eq:steady_status}
\end{align}
where $\boldsymbol{\Theta}$ denotes the stabilizing solution to the infinite-horizon \ac{dare} given by
\begin{align}
\boldsymbol{\Theta}&=\mathbf{Q}+\mathbf{A}^{\mathsf{T}}\boldsymbol{\Theta}\mathbf{A}-\mathbf{A}^{\mathsf{T}}\boldsymbol{\Theta}\mathbf{B}\bar{\mathbf{M}}\mathbf{\Phi }^{-1}\bar{\mathbf{M}}^{\mathsf{T}}\mathbf{B}^{\mathsf{T}}\boldsymbol{\Theta}\mathbf{A}. 
\end{align}
Under the stationary conditions $\boldsymbol{\Phi}_n \rightarrow \boldsymbol{\Phi}$ and $\mathbf{K}_n \rightarrow \mathbf{K}$, their respective expressions are given by
\begin{align}
    \mathbf{\Phi }&= \bar{\mathbf{M}}^{\mathsf{T}}\left( \mathbf{D}+\mathbf{B}^{\mathsf{T}}\boldsymbol{\Theta}\mathbf{B} \right) \bar{\mathbf{M}}^{}, \notag \\
    \mathbf{K}&=\mathbf{\Phi }_{}^{-1}\bar{\mathbf{M}}^{\mathsf{T}}\mathbf{B}^{\mathsf{T}}\boldsymbol{\Theta}\mathbf{A}. \notag 
\end{align}
Moreover, under the stationary operating regime detailed in Theorem \ref{theorem:infinite_horizon}, the posterior estimation-error covariance converges as $\mathbf{C}_n \rightarrow \mathbf{C}$.

Given that $\boldsymbol{\Theta}_1$ remains bounded and $\mathbb{E}[\mathbf{x}_1^{\mathsf{T}} \mathbf{x}_1]$ is finite, we have $\lim _{N\rightarrow \infty}\frac{1}{N}\mathbb{E} \left[ \mathbf{x}_{1}^{\mathsf{T}}\boldsymbol{\Theta}_1\mathbf{x}_{1}^{} \right] =0$.
Therefore, by letting $N\rightarrow \infty$, the infinite-horizon average cost is given by
\begin{align}
    J_{\infty}=\mathrm{tr}\left\{ \boldsymbol{\Theta} \mathbf{\Sigma }_v \right\} +\mathrm{tr}\left\{ \left( \mathbf{D}+\mathbf{B}^{\mathsf{T}}\boldsymbol{\Theta} \mathbf{B} \right) \mathbf{\Sigma }_e \right\} + \mathrm{tr}\left\{ \mathbf{K}_{}^{\mathsf{T}}\boldsymbol{\Phi}\mathbf{K} \mathbf{C}\right\}. \notag 
\end{align}
Furthermore, by taking the steady-state limit of the finite-horizon optimal control policy, the corresponding infinite-horizon control input is given by $\mathbf{u}_{n,\star}=-\mathbf{K}\hat{\mathbf{x}}_n$.
This completes the proof of Theorem \ref{theorem:infinite_horizon}.

\section{Proof of Corollary \ref{corollary:control_input_scalar}} \label{appendix:proof_infinite_horizon_scalar}
This corollary can be proved by scalarizing the derivations in Appendices \ref{appendix:proof_finite_horizon} and \ref{appendix:proof_infinite_horizon}.
In terms of notation, we replace $\mathbf{A}$, $\mathbf{B}$, $\mathbf{Q}$, $\mathbf{D}$, $\boldsymbol{\Theta}$, $\bar{\mathbf{M}}$, $\boldsymbol{\Sigma}_e$, and $\boldsymbol{\Sigma}_v$ with their scalar counterparts $A$, $B$, $Q$, $D$, $\theta$, $\bar{m}$, $\sigma_e^2$, and $\sigma_v^2$, respectively.

Thus, the auxiliary matrix $\boldsymbol{\Phi}$ reduces to $\phi = \bar{m}^2 (D + B^2\theta)$.
Under the condition that $\bar{m} \neq 0$ and the stabilizing solution satisfies $D + B^2\theta > 0$, the scalar auxiliary term $\phi$ is invertible.
Substituting $\phi$ into the Riccati equation yields
\begin{align}
    \theta&=Q+A^2\theta-A\theta B\bar{m}\phi ^{-1}\bar{m}B\theta A \notag \\
    &=Q+A^2 \theta -\frac{A^2B^2\theta^2}{D+B^2\theta}. \label{eq:p_scalar}
\end{align}
The matrix traces in \eqref{eq:control_cost_infinite_horizon}, i.e., $\mathrm{tr}\left\{ \boldsymbol{\Theta}\boldsymbol{\Sigma }_v \right\} $, $\mathrm{tr}\left\{ \left( \mathbf{D}+\mathbf{B}^{\mathsf{T}}\boldsymbol{\Theta}\mathbf{B} \right) \boldsymbol{\Sigma }_e \right\}$, and $\mathrm{tr}\left\{ \mathbf{K}_{}^{\mathsf{T}}\boldsymbol{\Phi}\mathbf{K}\mathbf{C} \right\}$, reduce to $\theta \sigma_v^2$, $(D+B^2\theta)\sigma_{e}^2$, and $k^2 \phi \sigma_{\epsilon}^2$, respectively.
Here, the scalar control gain is given by
\begin{align}
    k = \phi^{-1} \bar{m} B \theta A = \frac{AB\theta}{\bar{m}(D+B^2\theta)}. \notag
\end{align}
Therefore, the infinite-horizon average control cost for the scalar control input can be expressed as follows:
\begin{align}
    J_{\infty} (\mathbf{w}_{\rm c}, \mathbf{w}_{\rm p}) = \theta \sigma_{v}^2 + (D+B^2\theta) \sigma_{e}^2 + \frac{A^2B^2\theta^2}{D+B^2\theta} \sigma_{\epsilon}^2. 
\end{align}
In contrast to the vector case, the scalar Riccati equation admits a closed-form solution.
In particular, rearranging \eqref{eq:p_scalar} gives $B^2 \theta^2 + (D(1 - A^2) - QB^2)\theta - QD = 0$.
Solving this quadratic equation and selecting the nonnegative root yields
\begin{align}
    \theta =\frac{QB^2-D\left( 1-A^2 \right) +\sqrt{\Delta}}{2B^2}, \label{eq:appendix_theta}
\end{align}
where $\Delta \triangleq \left( D\left( 1-A^2 \right) -QB^2 \right)^2+4QDB^2$.
Since $\Delta \ge 0$, the scalar Riccati equation, i.e., \eqref{eq:p_scalar}, admits real-valued solutions. 
This completes the proof.

\section{Proof of Lemma \ref{lemma:decreasing}} \label{appendix:decrease}
This proof proceeds in two steps: i) Separating $\rho_{\rm p}$ from the infinite-horizon average control cost, and ii) taking the first-order derivative of the resulting expression.
For notational simplicity, define $ I_1 \triangleq \bar{m}k =\frac{AB\theta}{D+B^2\theta}$, which is independent of the beamforming vectors. 
Given that $\frac{A^2B^2\theta^2}{D+B^2\theta}=(D+B^2\theta)I_1^2$, the infinite-horizon average control cost in
\eqref{eq:infinite_horizon_avg_control_cost_scalar} can be rewritten as follows:
\begin{align}
    J_{\infty}=\theta \sigma _{v}^{2}+(D+B^2\theta )\left( \sigma _{e}^{2}+I_{1}^{2}\sigma _{\epsilon}^{2} \right), \label{eq:cost_I1}
\end{align}
where $\sigma_e^2$ and $\sigma_\epsilon^2$ are parameterized by $\rho_{\rm p}$.
Therefore, we aim to reveal this dependence in an analytical form. 

Recall that $u_{n,\star}=-k\hat{x}_n$ and $\pi=\mathbb{E}[u_{n,\star}^2]$.
We have $\pi=k^2\mathbb{E}[\hat{x}_n^2] = k^2(\mathbb{E}[x_n^2]-\sigma_\epsilon^2)$, which follows from $x_n=\hat{x}_n+\epsilon_n$ and the orthogonality property, i.e., $\mathbb{E}[\epsilon_n\hat{x}_n]=0$.
Given that $\bar{m}=\frac{\rho_{\rm p}}{1+\rho_{\rm p}}$, $\sigma_e^2=\frac{\pi\rho_{\rm p}}{(1+\rho_{\rm p})^2}$, and $k=I_1/\bar{m}$, we obtain $\sigma_e^2={I_1^2}/{\rho_{\rm p}}\left(\mathbb{E}[x_n^2]-\sigma_\epsilon^2\right)$.
To derive an analytical expression for $\mathbb{E}[x_n^2]$, we exploit the state-evolution equation, i.e., $x_{n+1}=(A-B\bar{m}_nk_n)\hat{x}_n+A\epsilon_n+Be_n+v_n$.
Taking the second-order moment of this equation gives
\begin{align}
    \mathbb{E}[x_{n+1}^{2}]
    &=(A-B\bar{m}_nk_n)^2\mathbb{E}[\hat{x}_{n}^{2}]
    +A^2\sigma_{\epsilon,n}^{2}
    +B^2\sigma_{e,n}^{2}
    +\sigma_v^{2}, \notag 
\end{align}
where the cross terms vanish due to the Kalman orthogonality
property and the statistical independence assumption.
When the control process reaches the steady state, we have
$I_2 (\rho_{\rm p}) \triangleq \mathbb{E}[x_n^2]=\mathbb{E}[x_{n+1}^2]$, i.e., the second moment of the plant state is time-invariant.
In addition, we also have $\bar{m}_nk_n\rightarrow I_1$,
$\sigma_{\epsilon,n}^2\rightarrow\sigma_\epsilon^2$, and
$\sigma_{e,n}^2\rightarrow\sigma_e^2$.
Therefore, we have
\begin{align}
     I_2 (\rho_{\rm p})=\frac{\left( A^2-(A-BI_1)^2 \right) \sigma _{\epsilon}^{2}+B^2\sigma _{e}^{2}+\sigma _{v}^{2}}{1-(A-BI_1)^2}. \label{eq:I2_stationary}
\end{align}
Define the normalized steady-state estimation-error
variance as $I_3\triangleq{\sigma_\epsilon^2}/{I_2 (\rho_{\rm p})}$.
Then, we can show that $I_3$ is independent of $\rho_{\rm p}$.
To this end, define the steady-state prior estimation-error variance as $\sigma_{\epsilon,-}^2 \triangleq A^2\sigma_\epsilon^2+B^2\sigma_e^2+\sigma_v^2$.
Stacking the real and imaginary parts of the received signal,
the corresponding posterior estimation-error variance is given by
\begin{align}
    \sigma _{\epsilon}^{2}=\frac{\sigma _{\epsilon ,-}^{2}}{1+\eta \sigma _{\epsilon ,-}^{2}/I_2 (\rho_{\rm p})}, \label{eq:scalar_kf_steady}
\end{align}
where $\eta \triangleq {2P_{\rm p}\|\mathbf{h}_{\rm f}\|_2^2}/{\sigma_{\rm f}^2}$.
Thus, the steady-state second-moment equation in \eqref{eq:I2_stationary} can be rewritten as follows:
\begin{align}
    I_2 (\rho_{\rm p}) = (A-BI_1)^2 \left(I_2 (\rho_{\rm p})-\sigma_\epsilon^2\right) +\sigma_{\epsilon,-}^2.
\end{align}
Dividing both sides by $I_2 (\rho_{\rm p})$ and using $I_3=\sigma_\epsilon^2/I_2 (\rho_{\rm p})$, we obtain
\begin{align}
    {\sigma_{\epsilon,-}^2}/{I_2}=1-(A-BI_1)^2(1-I_3). \label{eq:normalized_prior_error}
\end{align}
Hence, we have 
\begin{align}
    I_3=\frac{1-(A-BI_1)^2(1-I_3)}{1+\eta \left( 1-(A-BI_1)^2(1-I_3) \right)}. \label{eq:I3_fixed_point}
\end{align}
Since both $I_1$ and $\eta$ are independent of $\rho_{\rm p}$,
\eqref{eq:I3_fixed_point} is not parameterized by $\rho_{\rm p}$.

Next, we show how $I_2(\rho_{\rm p})$ depends on $\rho_{\rm p}$.
Using $\sigma_\epsilon^2=I_3I_2 (\rho_{\rm p})$, the actuation-disturbance variance is given by
\begin{align}
    \sigma_e^2=\frac{I_1^2}{\rho_{\rm p}} \left(I_2 (\rho_{\rm p})-\sigma_\epsilon^2\right) = \frac{I_1^2(1-I_3)I_2 (\rho_{\rm p})}{\rho_{\rm p}}. \label{eq:sigma_e_I3}
\end{align}
Substituting these two error variances into the steady-state
second-moment equation gives
\begin{align}
    I_2 (\rho_{\rm p})= I_4 I_2 (\rho_{\rm p})+A^2I_3I_2 (\rho_{\rm p})+\frac{B^2I_{1}^{2}(1-I_3)I_2 (\rho_{\rm p})}{\rho _{\mathrm{p}}}+\sigma _{v}^{2}, \notag
\end{align}
where $I_4 \triangleq 1-(A-BI_1)^2(1-I_3)-A^2I_3$ is independent of $\rho_{\rm p}$.
By algebraic manipulations, $I_2 (\rho_{\rm p})$ can be expressed as
\begin{align}
     I_2 (\rho_{\rm p}) = \frac{\sigma_v^2\rho_{\rm p}} {I_4\rho_{\rm p} -B^2I_1^2(1-I_3)}. \label{eq:I2_rho}
\end{align}
Recall that $I_2 (\rho_{\rm p})\triangleq\mathbb{E}[x_n^2]=\mathbb{E}[x_{n+1}^2]$ denotes the steady-state second moment of the plant state, indicating that $I_2 (\rho_{\rm p}) \ge 0$. 
Consequently, \eqref{eq:I2_rho} is positive and $I_4 > 0$.
Substituting \eqref{eq:I2_rho} into $\sigma_e^2$ and $\sigma_\epsilon^2$ gives
\begin{align}
    \sigma_e^2 &=\frac{I_1^2(1-I_3)\sigma_v^2}{I_4\rho_{\rm p}-B^2I_1^2(1-I_3)}, \notag \\
    \sigma_\epsilon^2 &=\frac{I_3\sigma_v^2\rho_{\rm p}}{I_4\rho_{\rm p} -B^2I_1^2(1-I_3)}. \notag
\end{align}
Substituting the above two expressions into \eqref{eq:cost_I1},
the infinite-horizon average control cost can be expressed as
\begin{align}
    J_\infty(\rho_{\rm p})=\theta\sigma_v^2+\frac{(D+B^2\theta)I_1^2\sigma_v^2(1-I_3+I_3\rho_{\rm p})}{I_4 \rho_{\rm p}-B^2I_1^2(1-I_3)}. \label{eq:cost_rho}
\end{align}
By taking the first-order derivative of \eqref{eq:cost_rho} w.r.t. $\rho_{\rm p}$, we have 
\begin{align}
   \frac{\mathrm{d}J_{\infty}}{\mathrm{d}\rho _{\mathrm{p}}}=-\frac{(D+B^2\theta )I_{1}^{2}\sigma _{v}^{2}(1-I_3)\left( I_4 +B^2I_{1}^{2}I_3 \right)}{(I_4 \rho _{\mathrm{p}}-B^2I_{1}^{2}(1-I_3))^2}. \label{eq:cost_derivative_rho}
\end{align}
Given that $I_3 = \sigma_\epsilon^2 / (\mathbb{E}[\hat{x}_n^2] + \sigma_\epsilon^2)$ and $\mathbb{E}[\hat{x}_n^2]>0$, we further have $I_3 \in [0,1)$.
Moreover, according to the discussion at the end of Appendix \ref{appendix:proof_infinite_horizon_scalar}, we have $\theta > 0$, since it is easy to prove from \eqref{eq:appendix_theta} that $\theta$ is positive under the assumption that $D>0$.
Therefore, whenever $I_1=\bar{m}k \neq 0$, \eqref{eq:cost_derivative_rho} is strictly negative, i.e., ${\mathrm{d}J_\infty}/{\mathrm{d}\rho_{\rm p}}<0$.
This completes the proof.

\bibliographystyle{IEEEtran}
\bibliography{reference}
\end{document}